\newcommand{\ob}[1]{{#1}_{\mathrm{o}}}
\newcommand{\thetaobs}{{\theta^{M_{\mathrm{o}}} } }
\title{Flexible Selective Inference with Flow-based Transport Maps}
\author[1]{Sifan Liu}
\affil[1]{Center for Computational Mathematics, Flatiron Institute}
\author[2]{Snigdha Panigrahi\thanks{The author gratefully acknowledges support by NSF CAREER Award DMS-2337882.}}
\affil[2]{Department of Statistics, University of Michigan}
\date{June 2025}
\begin{document}

\maketitle

\begin{abstract}
Data-carving methods perform selective inference by conditioning the distribution of data on the observed selection event. However, existing data-carving approaches typically require an analytically tractable characterization of the selection event. This paper introduces a new method that leverages tools from flow-based generative modeling to approximate a potentially complex conditional distribution, even when the underlying selection event lacks an analytical description---take, for example, the data-adaptive tuning of model parameters. The key idea is to learn a transport map that pushes forward a simple reference distribution to the conditional distribution given selection. This map is efficiently learned via a normalizing flow, without imposing any further restrictions on the nature of the selection event. Through extensive numerical experiments on both simulated and real data, we demonstrate that this method enables flexible selective inference by providing: (i) valid p-values and confidence sets for adaptively selected hypotheses and parameters, (ii) a closed-form expression for the conditional density function, enabling likelihood-based and quantile-based inference, and (iii) adjustments for intractable selection steps that can be easily integrated with existing methods designed to account for the tractable steps in a selection procedure involving multiple steps.
\end{abstract}

\section{Introduction}
\label{sec:intro}

A primary goal in data science is to extract models and formulate hypotheses from observed data. 
The practice of drawing inference from such selected models or hypotheses, without deferring the task to future data, is known as post-selection or selective inference. 
Whether it is basing inference on the most predictive model selected from a set of candidate models, testing hypotheses formulated after querying the data, or seeking inference in models where data-adaptive choices, such as the degrees of freedom or the model complexity parameter used in fitting, are involved—these are all examples of selective inference straight out of the data scientist's playbook.

A practical strategy to addressing selective inference in such problems is to account for the selection procedure by using a conditional distribution. 
This distribution is derived by conditioning on the selection event observed in the data, or on a suitable subset of that event.
A well-known example of this strategy is data-splitting, where the data is divided into two parts, one used for selection and the other reserved for inference. 
In this approach, inference is derived by conditioning on the data used for selection. 
This includes the sample splitting approach for independently and identically distributed data \citep{cox1975note}, as well as more recent approaches for certain parametric distributions, such as data-fission \citep{rasines2021splitting, dharamshi2025generalized, leiner2025data}, which involve dividing individual observations into two parts. 
When the two parts obtained by splitting the data are independent, the conditional distribution used for selective inference coincides with the unconditional distribution of the holdout data reserved exclusively for inference.

Conditioning on all the data used in selection, as is done in data-splitting, often amounts to conditioning on more information than is actually necessary.
Instead of adopting a data-splitting strategy, adjustments for the selection procedure can be made by conditioning on less information---ideally, only the event of selection itself. 
This conditioning approach, which discards only the information involved in selection, was introduced by \cite{fithian2014optimal,lee2016exact} for applications in variable selection.
In this paper, we refer to the class of conditional methods, which, unlike data-splitting, reuse data from selection, as  ``data-carving''.
When feasible, a data-carving approach can offer significant advantages over data-splitting methods, both in terms of selection accuracy and inferential power. 
This is because, in a data-splitting approach, selection tends to be less accurate when performed only on a subset of the data, and inference based solely on the holdout portion is less efficient at the same time, leading to tests with low power and unnecessarily wide confidence intervals.

\sloppy{Recent work have developed data-carving methods in various problems for regression, classification, and dimension reduction.}
This includes, for example, the work by \cite{le2022more, liu2023exact, panigrahi2023approximate, huang2023selective, gao2024selective, perry2024inference, panigrahi2024exact, pirenne2024parametric, bakshi2024inference}.
The adjustments based on a conditional distribution typically depend on an analytically tractable description of the selection event or, at the very least, utilize some knowledge of the geometry of this event.
In practice, however, the process of extracting models and hypotheses from data can be far more complex than the types of selection events these adjustments can handle.
For example, if a data scientist adaptively selects tuning parameters to control the complexity of model fit or balance the bias-variance tradeoff during fitting, describing the selection of these parameters is a notoriously difficult task.
While computing the necessary  adjustment for such selection events is beyond the reach of current data-carving methods, we develop a new method in this paper that computes inference from the complex, otherwise intractable conditional distribution based on the full data.

The key idea of our approach is to construct a transport map that transforms the conditional distribution of the relevant statistic to its pre-selection distribution, effectively acting as a debiasing transformation that removes the effect of selection. To obtain such a transport map, we adopt tools from flow-based generative modeling and learn this map in a data-driven manner. Importantly, our approach imposes no restrictions on the selection event or the algorithm leading to this event, applying even when the event lacks a tractable description. Instead, we only require that the selection procedure can be repeatedly applied to synthetically generated data, whose outputs are used as training data to fit the flow-based generative model.

Our main contributions are summarized below:
\begin{enumerate}[label=(\roman*)]
\item  We show how transport maps can be used to construct valid hypothesis tests and confidence sets for adaptively chosen parameters. We provide guarantees on the selective Type~$\Rom{1}$ error and selective coverage probability in terms of the accuracy of an approximate transport map.
\item The transport map directly yields an approximation to the conditional density function given the observed selection event. This allows for a range of inference procedures, such as those based on quantiles and conditional maximum likelihood estimation (MLE), as in \cite{panigrahi2022approximate}.
\item When the selection procedure involves multiple  steps, our method can be easily integrated with existing data-carving tools that adjust for some parts of the process with tractable descriptions.
For example, when applying the lasso for variable selection at a fixed regularization parameter, several types of data-carving tools exist to adjust for this type of selection. If the regularization parameter is chosen adaptively (e.g., via cross-validation), our method can account for this intractable tuning step and combine with these existing tools to ensure valid inference.
\end{enumerate}

The remainder of the paper is organized as follows.
In Section~\ref{sec: setup}, we present the conceptual framework for the data-carving approach, illustrate it with a concrete example, and review related work.
In Section~\ref{sec: method}, we describe how transport maps can be used to perform selective inference and provide selective inferential guarantees for our approach.
Section~\ref{sec: algorithm} presents our method for learning transport maps via normalizing flows.
Section~\ref{sec: extensions} discusses several extensions, including handling nuisance parameters and integrating our approach with existing data-carving tools.
In Section~\ref{sec: experiments}, we illustrate the application of our method on several examples that are beyond the reach of existing data-carving approaches, including an application to single-cell data analysis.
Section~\ref{sec:conc} concludes the paper with a discussion of future directions.

\section{Data-carving: framework and review}
\label{sec: setup}

In this section, we introduce the framework for data-carving methods, illustrate our proposed method with a first example, and provide a review of other related work.

\subsection{Framework}

\paragraph{Selection procedure}  Suppose a data scientist is provided with a dataset $D \in \calD$ and applies a selection procedure on this data to select a model. 
This procedure may involve additional randomness, independent of the data, such as from a data-splitting procedure or the addition of independent, external noise.
For example, this additional randomness may arise from train-test splits used during a model fitting procedure when tuning parameters are chosen via cross-validation, or from externally added noise to the selection algorithm to facilitate computationally efficient and more powerful selective inference, as done in \cite{tian2018selective, panigrahi2024exact, panigrahi2023approximate, huang2023selective}.
 We represent this external randomness by a randomization variable $W \in \calW$ with distribution $\bbQ$. If no additional randomness is involved in the selection procedure, we take $W = 0$. We denote by $\ob{D}$ and $\ob{W}$ the observed realizations of $D$ and $W$, respectively.

The selection procedure generates a model $M$ as a function of the data and randomization variable, which we denote as $M = \widehat{M}(D, W)$. The notion of a ``model'' is context-dependent but can be broadly understood as a collection of plausible data-generating distributions. For example, in regression settings, a model might correspond to a linear model involving a selected subset of predictors. In this case, the selection procedure determines which predictors to include in our model. Our framework here is quite general, and applies to a wide range of selection procedures, including those that can be formulated as algorithms. These include, for example, the choice of regularization parameters, degrees of freedom or other model complexity parameters, and the number of dimensions to reduce the data to—such as principal components—in dimension-reduction techniques, all of which are illustrated with numerical examples later in the paper.

\paragraph{Selective inference: goal and guarantees}

Given a model, a fundamental task is to draw inference about parameters based on observed data. However, when the model itself is selected in a data-dependent manner, classical inference procedures are no longer valid due to selection bias. To correct for this bias, data-carving methods base inference for the selected model $\ob{M} = \widehat{M}(\ob{D}, \ob{W})$ on distributions conditioned on the selection event.

Hereafter, we denote the target post-selection parameter in the selected model by $\thetaobs=\theta(\ob{M})$, where the superscript $\ob{M}$ emphasizes its dependence on the selected model.
Common inferential tasks include testing the null hypothesis $H_0: \theta^{\ob{M}} = \theta_0$ or constructing a confidence set for $\theta^{\ob{M}}$. For now, we assume that $\ob M$ is a parametric model fully specified by the parameter $\thetaobs$. We address the issue with nuisance parameters in Section~\ref{sec: extensions}.
Had the model $\ob M$ been fixed in advance, inference could proceed using a statistic  $T \sim \bbP_{\theta^{\ob{M}}}$, where $\bbP_{\theta^{\ob{M}}}$ is the ``pre-selection'' distribution of $T$. However, such a na{\"i}ve method fails to account for the data-dependent nature of the model and typically results in overly optimistic, invalid inference.

To adjust for selection, we seek to provide valid inference conditioned on the selected model, as outlined in \cite{fithian2014optimal}. 
For hypothesis testing, a test $\phi(T)$ is said to control the \emph{selective Type~$\Rom{1}$ error} at level $\alpha$ if
\begin{equation}
\label{selective:type1}
    \bbE_{\theta_0} \Big[\phi(T) \mid \widehat M=\ob{M} \Big]\leq\alpha, 
\end{equation}
where the expectation is taken under the joint distribution of $D\sim \bbP_{\theta_0}$ and $W\sim\bbQ$, conditional on the selection event $\widehat{M}(D, W) = \ob{M}$.
Similarly, a confidence set $C(T)$ for $\theta^{\ob{M}}$ achieves \emph{selective coverage probability} $1 - \alpha$ if
\begin{equation}
\label{selective:coverage}
\bbP_{\thetaobs} \left[\theta^{\ob{M}}\in C(T) \mid \widehat M=\ob{M} \right]\geq 1-\alpha.  
\end{equation}
By the law of iterated expectations, the selective inferential guarantees in \eqref{selective:type1} and \eqref{selective:coverage} imply unconditional validity, holding on average over the possible outcomes of $\widehat{M}$.

\paragraph{Conditional distribution under data-carving}

We now turn to the conditional distribution of $T$ under data-carving, which is derived by conditioning on the selection event. 
Let the density of $T$ under the pre-selection distribution $\bbP_{\thetaobs}$ be $p_{\thetaobs}$.
We denote the conditional distribution of $T\mid \{\widehat M(D,W)=\ob M \} $ under $\bbP_{\thetaobs}$ as $\bbP^*_{\thetaobs}$, with density given by
\begin{equation}
\label{carving:density}
p^*_{\theta^{\ob{M}}}(t) \propto p_{\theta^{\ob{M}}}(t) \times \PP[\thetaobs]{\widehat M =\ob{M}\mid T=t}.
\end{equation}
Here, $\PP[\thetaobs]{\widehat M =\ob{M}\mid T=t}$ represents the probability of selecting model $\ob M$ given the statistic $T=t$, marginalizing over any remaining randomness in the data $D$ and $W$. A formal derivation is provided in Appendix~\ref{app: conditional density}. The selective Type $\Rom{1}$ error and coverage probability guarantee in \eqref{selective:type1} and \eqref{selective:coverage} can be equivalently expressed as
\begin{align*}
    \EE[\bbP^*_{\theta_0} ]{\phi(T) } \leq \alpha\;\; \text{ and }\;\; \bbP^*_{\thetaobs} \left[\thetaobs \in C(T) \right] \geq 1-\alpha,
\end{align*}
respectively, using these notations.

Whether or not randomization is involved in the selection procedure, obtaining the selective distribution $\bbP^*_{\thetaobs}$ in \eqref{carving:density} in closed form can be infeasible unless the selection event $\{(D, W) \in \calD \times \calW : \widehat{M}(D, W) = \ob{M} \}$ admits a tractable description.
We now present a motivating data example in which the selection event is difficult to describe analytically, and preview how our proposed method enables valid selective inference without requiring an explicit description of the selection event.

\subsection{A first example and related work}

In the example below, we consider the problem of fitting a regression spline, where the number of knots is determined in a data-adaptive manner. Regression splines are commonly used to model nonlinear relationships between a response variable $y$ and a predictor variable $x$. The number of knots acts as a tuning parameter that controls the complexity of the model fit and is typically chosen using data-adaptive tools such as cross-validation (CV).


\begin{figure}
    \centering
    \includegraphics[width=\textwidth]{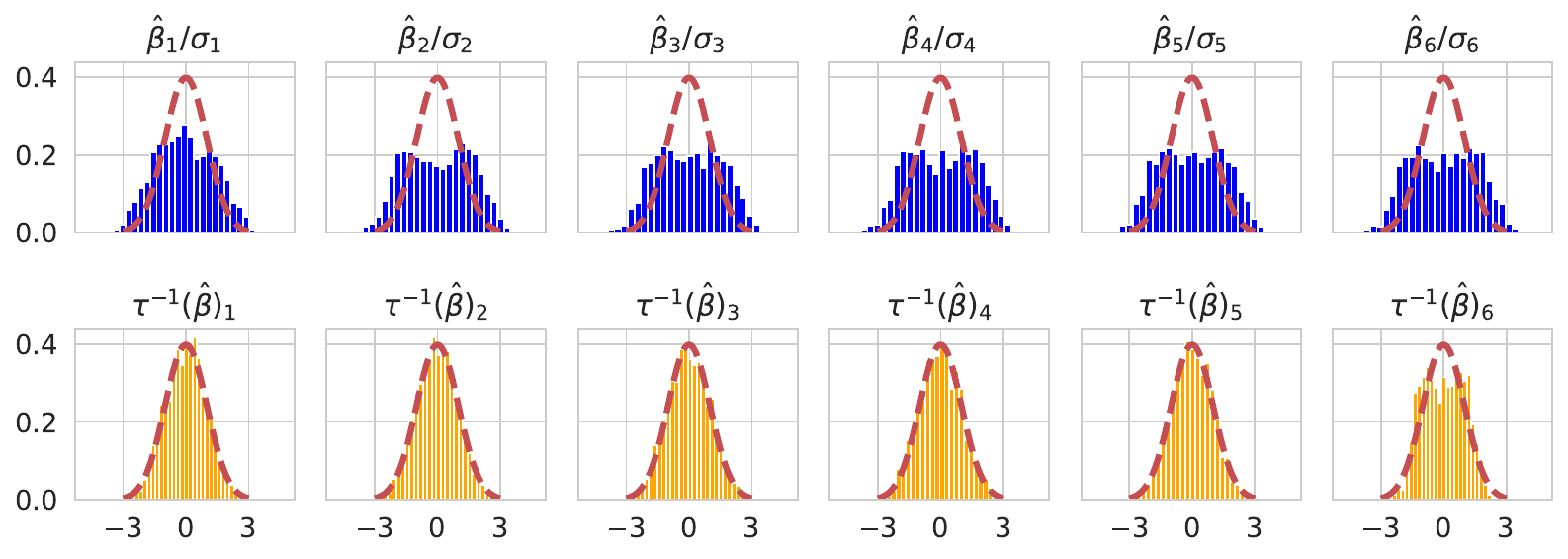}
    \caption{Top panel: histograms of $\hat\beta_j/\sigma_j$ conditioned on selecting 5 knots under the global null. Bottom panel: histograms of $\tau^{-1}(\hat\beta)_j$, where $\tau^{-1}$ is the learned transport map that pulls back the conditional distribution of $\hat\beta$ to the standard Gaussian distribution. The red dashed curves represent the density function of $\N(0,1)$. Due to the selection of knots via CV, prior to inference, the distribution of $\hat\beta_j/\sigma_j$ is distorted and is no longer $\N(0,1)$. The learned transport map $\tau$ pulls back the post-selection distribution of $\hat\beta$ to the standard Gaussian, yielding a new test statistic $\tau^{-1}(\hat\beta)_j$ for valid selective inference.}
    \label{fig: spline betahat hist}
\end{figure}

In particular, we perform regression using a natural cubic spline, with the number of knots selected through a 10-fold CV procedure. In our specific instance, $5$ knots are chosen, yielding 6 spline basis functions $\{b_1, \ldots, b_6\}$. Additionally, we include the constant function $b_0(x)=1$ to serve as an intercept. The selected nonlinear model $\ob{M}$ is given by
\begin{align}
    y_i = \sum_{k=0}^{6} \beta_k b_k(x_i) + \ep_i, \quad i \in \{1,2,\ldots, n\},
    \label{spline:selectedmodel}
\end{align}
where the errors $\ep_i\iid \N(0,\sigma^2)$. As our selective inferential task, we consider testing the global null hypothesis that the predictor is not associated with the response, i.e., $\beta_k=0$ for all $k=1,\ldots,6$.

If the selection of the number of knots is ignored in this problem, inference for the coefficients $\beta_k$ can be carried out using the least squares estimator $T = \hat{\beta}_{1:6}$ of $\beta_{1:6}$. Under the na{\"i}ve approach, the statistic $T$ follows a multivariate normal distribution $\mathcal{N}(0, \Sigma)$ under the global null, where $\Sigma = \sigma^2 ((X^\top X)^{-1})_{1:6, 1:6}$ and $X = (b_0(x), b_1(X), \ldots, b_K(X)) \in \R^{n \times 7}$ denotes the design matrix formed with the selected basis functions.

However, when we condition on the event that the selected number of knots is 5, the null distribution of $T=\hat{\beta}_{1:6}$ is no longer $\N(0,\Sigma)$.
To illustrate the effect of knot selection on the distribution of $\hat\beta$, we generate 2000 datasets under the global null model, in which the output of the CV procedure is 5 knots. In each replicate, we compute the least squares estimator $\hat{\beta}$. The top panel of Figure~\ref{fig: spline betahat hist} displays histograms of each coordinate of $\hat{\beta}$, standardized by its standard deviation $\sqrt{\Sigma_{k,k}}$. These empirical distributions show clear deviations from the standard normal, represented by the red dashed curves.


Using our method, we learn a pushforward transport map $\hat{\tau}$ from the pre-selection distribution $\bbP_{\theta_0}$ to the conditional distribution $\bbP^*_{\theta_0}$, which we formally define in Section \ref{sec: method}.
As a consequence, we have that if $T\sim \bbP^*_{\theta_0}$, then $\tau^{*-1}(T)\sim \N(0,\Sigma)$ (see Lemma \ref{lem:equivdefn}). The bottom panel of Figure~\ref{fig: spline betahat hist} displays histograms of each coordinate of $\tau^{*-1}(T)$, standardized by their corresponding standard deviations. These distributions closely match the standard normal distribution. This result demonstrates that, despite the analytical intractability of the selection event, the learned transport map in our approach successfully corrects for selection bias and ensures valid selective inference.


Our method can be seen related to approaches introduced for likelihood-free or simulation-based inference, which enable inference when the likelihood function is intractable, but generating data from the underlying model is feasible.
Work in this area include approximate Bayesian computation \citep{marin2012approximate}, synthetic likelihood approaches \citep{price2018bayesian} and simulation-based approaches \citep{xie2022repro, awan2024simulation}. 
Recent machine learning techniques that have made likelihood-free inference feasible include neural posterior estimation \citep{papamakarios2016fast}, neural likelihood estimation \citep{papamakarios2019sequential}, and ratio estimation methods \citep{cranmer2015approximating, thomas2022likelihood}.
Similar to the work in \cite{papamakarios2019sequential}, where an autoregressive flow is trained to approximate the likelihood of data given parameter values, our approach applies flow-based techniques to estimate a conditional likelihood function.

In the selective inference literature, \cite{liu2022black} learn the probability of a selection event given the data, which in turn facilitates a pivot for scalar-valued parameters.
In contrast, our method takes a different approach by using transport maps to learn the conditional distribution, and offers greater versatility in the range of inferential tasks it can perform. 
This includes, for example, inference for vector-valued parameters, joint likelihood-based inference, as well as the ability to combine with existing corrections for the more tractable parts of the selection procedure, which we discuss in Sections~\ref{sec: method} and \ref{sec: extensions}.
Furthermore, as shown in \cite{guglielmini2025selective}, the MLE from the conditional likelihood function---readily obtained from our current method---can be used to make inference on potentially complex functionals of the post-selection parameter when coupled with tools like the delta method and the bootstrap. 
In this sense, our approach may be extended to facilitate inference beyond just linear functionals. A different type of guarantee is offered by simultaneous methods for selective inference, such as those in \cite{berk2013valid, zrnic2024locally}, which aim at validity over a set of plausible targets instead of the one observed in our data. In contrast, the conditional guarantees in our work not only ensure valid inference for the selected model, but also enable data-adaptive inference, i.e., in cases where selection bias is either minimal or absent, inference from our approach matches with the na{\"i}ve approach.

\section{Data-carving using transport maps}
\label{sec: method}

In this section, we present the idea of using transport maps for performing selective inference.
We describe how to construct tests and confidence sets for an $\R^d$-valued parameter $\theta^{\ob{M}}$, when a transport map, or a sufficiently accurate approximation to it, is available. 
As discussed earlier in Section \ref{sec: setup}, we consider a  statistic $T$ whose pre-selection distribution is denoted by $\bbP_{\theta^{\ob{M}}}$.

\begin{definition}\label{def: pushforward}
Let $\tau^*_{\theta^{\ob{M}}} : \R^d \to \R^d$ be a diffeomorphism, i.e., a continuously differentiable and invertible map whose inverse $\tau_{\theta^{\ob{M}}}^{*-1}$ is also differentiable.
The map $\tau_{\theta^{\ob{M}}}^*$ is said to push forward the pre-selection distribution $\bbP_{\theta^{\ob{M}}}$ of $T$ to its conditional (post-selection) distribution $\bbP_{\theta^{\ob{M}}}^*$ if 
\begin{align*}
p_{\theta^{\ob{M}}}^*(t) = p_{\theta^{\ob{M}}}(\tau_{\theta^{\ob{M}}}^{*-1}(t))\cdot |\nabla \tau_{\theta^{\ob{M}}}^{*-1}(t)|,
\end{align*}
where $p_{\theta^{\ob{M}}}$ and $p_{\theta^{\ob{M}}}^*$ denote the densities of $\bbP_{\theta^{\ob{M}}}$ and $\bbP_{\theta^{\ob{M}}}^*$, respectively, and $|\nabla \tau_{\theta^{\ob{M}}}^{*-1}(t)|$ denotes the determinant of the Jacobian of the inverse map.
We denote this pushforward relationship by
\begin{align}
    \label{equ: pushforward}
    \tau_{\theta^{\ob{M}}}^* \# \bbP_{\theta^{\ob{M}}} = \bbP_{\theta^{\ob{M}}}^*,
\end{align}
\label{defn:TM}
and call $\tau_{\theta^{\ob{M}}}^*$ a transport map from $\bbP_{\theta^{\ob{M}}}$ to $\bbP_{\theta^{\ob{M}}}^*$.
\end{definition}

From Definition~\ref{def: pushforward}, the inverse map $\tau_{\theta^{\ob{M}}}^{*-1}$ can be interpreted as a \emph{pullback} from the conditional distribution to the pre-selection distribution. 
This is formalized in the following lemma.
\begin{lemma}
Suppose $\tau_{\theta^{\ob{M}}}^* : \R^d \to \R^d$ satisfies the pushforward relation~\eqref{equ: pushforward}.
Then, if $T\sim \bbP_{\theta^{\ob{M}}} ^*$, it follows that $\tau_{\theta^{\ob{M}}}^{*-1}(T)\sim \bbP_{\theta^{\ob{M}}}$.
\label{lem:equivdefn}
\end{lemma}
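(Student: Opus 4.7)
The plan is to apply the standard change-of-variables formula to $S := \tau_{\theta^{\ob{M}}}^{*-1}(T)$ and then invoke the pushforward identity from Definition~\ref{def: pushforward} to cancel the Jacobian factors. Since $\tau_{\theta^{\ob{M}}}^*$ is assumed to be a diffeomorphism, its inverse $\tau_{\theta^{\ob{M}}}^{*-1}$ is a bijective, continuously differentiable map with differentiable inverse, so $S$ is a well-defined random vector with a density computable by the usual Jacobian formula.

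First, I would abbreviate $\tau := \tau_{\theta^{\ob{M}}}^*$, $p := p_{\theta^{\ob{M}}}$, and $p^* := p_{\theta^{\ob{M}}}^*$ to reduce notational clutter. Starting from $T \sim \bbP_{\theta^{\ob{M}}}^*$ and $S = \tau^{-1}(T)$, the change-of-variables formula gives that the density of $S$ at a point $s$ equals $p^*(\tau(s)) \cdot |\nabla \tau(s)|$, where $|\nabla \tau(s)|$ denotes the absolute determinant of the Jacobian of $\tau$ at $s$.

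Next, I would substitute the pushforward identity $p^*(t) = p(\tau^{-1}(t)) \cdot |\nabla \tau^{-1}(t)|$ from Definition~\ref{def: pushforward}, evaluated at $t = \tau(s)$. This yields that the density of $S$ at $s$ equals $p(s) \cdot |\nabla \tau^{-1}(\tau(s))| \cdot |\nabla \tau(s)|$. By the inverse function theorem, $\nabla \tau^{-1}(\tau(s)) = (\nabla \tau(s))^{-1}$, so the product of the two determinant factors is $1$. The density of $S$ therefore reduces to $p(s)$, which is exactly the density of $\bbP_{\theta^{\ob{M}}}$, completing the proof.

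There is no real obstacle here: the argument is a two-line manipulation once the change-of-variables formula and the inverse function theorem are invoked, and the diffeomorphism assumption in Definition~\ref{def: pushforward} guarantees that both tools apply. The only mild bookkeeping concern is keeping the forward and inverse Jacobian factors on the correct sides so that they cancel rather than compound, which is why I would introduce the abbreviated notation before beginning the computation.
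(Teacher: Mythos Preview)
Your proposal is correct and is essentially the same approach as the paper's, which simply states that the result follows from the definition of $\tau_{\theta^{\ob{M}}}^*$. You have just unpacked that one-line remark into an explicit change-of-variables computation with the inverse function theorem providing the Jacobian cancellation; nothing beyond this is needed.
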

\begin{proof}
This result follows from the definition of $\tau_{\theta^{\ob{M}}}^*$.
\end{proof}


Put another way, the inverse map $\tau^{*-1}_{\theta^{\ob{M}} }$ acts as a debiasing transformation that corrects for the effect of selection. Under this transformation, the pulled-back statistic $\tau^{*-1}_{\theta^{\ob{M}} }(T)$ follows the pre-selection distribution. Consequently, any inference procedure that is valid under the pre-selection distribution can be directly applied to the pulled-back statistic. We elaborate on this idea in the remainder of this section.

\subsection{Hypothesis tests}

Given a valid test under the pre-selection distribution, a conditionally valid test---one that controls the selective Type~$\Rom{1}$ error as defined in \eqref{selective:type1}---can be obtained immediately by using the pullback of a transport map from $\bbP_{\theta^{\ob{M}}}$ to $\bbP_{\theta^{\ob{M}}}^*$.

\begin{thm}
Consider testing the null hypothesis $H_0:\theta^{\ob{M}}=\theta_0$. 
Let $\phi(T)$ be a level-$\alpha$ test under $\bbP_{\theta_0}$, i.e., $\EE[\bbP_{\theta_0}]{{\phi(T)}}\leq\alpha$. Suppose $\tau_{\theta_0}^*$ is a transport map satisfying the pushforward condition \eqref{equ: pushforward} when $\theta^{\ob{M}}=\theta_0$.
Then, $\phi \circ \tau_{\theta_0}^{*-1}(T)$ controls the selective Type~$\Rom{1}$ error at level $\alpha$, i.e., 
$$\EE[\bbP^*_{\theta_0}]{{\phi\circ\tau_{\theta_0}^{*-1}(T) }}\leq\alpha.$$
\label{thm:test}
\end{thm}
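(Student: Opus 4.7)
The proof is essentially a one-step consequence of Lemma~\ref{lem:equivdefn} combined with a change-of-variables argument, so the plan is straightforward.

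My plan is to start from the left-hand side $\EE[\bbP^*_{\theta_0}]{\phi\circ\tau_{\theta_0}^{*-1}(T)}$ and recognize it as the expectation of $\phi$ evaluated at the random variable $T' := \tau_{\theta_0}^{*-1}(T)$. First, I would invoke Lemma~\ref{lem:equivdefn}: since $\tau_{\theta_0}^*$ is assumed to satisfy the pushforward relation~\eqref{equ: pushforward} under $\theta^{\ob{M}} = \theta_0$, and since $T \sim \bbP^*_{\theta_0}$ under the conditional distribution, the lemma gives $T' = \tau_{\theta_0}^{*-1}(T) \sim \bbP_{\theta_0}$.

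Next, I would rewrite the expectation as
\begin{align*}
\EE[\bbP^*_{\theta_0}]{\phi\circ\tau_{\theta_0}^{*-1}(T)} = \EE[\bbP_{\theta_0}]{\phi(T')},
\end{align*}
where the equality is simply the law-of-the-unconscious-statistician / change-of-variables identity guaranteed by the previous step. Finally, the assumption that $\phi$ is a level-$\alpha$ test under $\bbP_{\theta_0}$ (i.e., $\EE[\bbP_{\theta_0}]{\phi(T)} \leq \alpha$) directly upper bounds the right-hand side by $\alpha$, concluding the argument.

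There is essentially no obstacle here: all the heavy lifting is done by the definition of a transport map (Definition~\ref{def: pushforward}) and Lemma~\ref{lem:equivdefn}, which together guarantee that the pullback transformation restores the pre-selection distribution exactly. The only thing worth being careful about is matching the distribution under which the expectation is taken on each side of the equation, and making clear that the level-$\alpha$ property of $\phi$ is used at the pre-selection distribution $\bbP_{\theta_0}$ rather than at the conditional $\bbP^*_{\theta_0}$.
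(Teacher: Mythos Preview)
Your proposal is correct and essentially identical to the paper's proof: both invoke Lemma~\ref{lem:equivdefn} to conclude that $\tau_{\theta_0}^{*-1}(T)\sim\bbP_{\theta_0}$ when $T\sim\bbP^*_{\theta_0}$, then rewrite the expectation and apply the level-$\alpha$ property of $\phi$ under $\bbP_{\theta_0}$.
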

\begin{proof}
    Lemma \ref{lem:equivdefn} implies that
    $\widetilde{T} = \tau_{\theta_0}^{*-1}(T) \sim  \bbP_{\theta_0}$
    under the null hypothesis. Therefore, 
    \begin{align*}
        \EE[\bbP^*_{\theta_0}]{{\phi\circ\tau_{\theta_0}^{*-1}(T) }} = \EE[\bbP_{\theta_0}]{{\phi(\widetilde{T})}}\leq\alpha.
    \end{align*}
\end{proof}

In practice, the ideal transport map $\tau_{\theta_0}^*$ is typically not available. As described in Section~\ref{sec: algorithm}, our proposed method estimates an approximate map $\hat\tau_{\theta_0}$ to serve as a surrogate for the ideal map $\tau_{\theta_0}^*$. While $\hat\tau_{\theta_0}$ may not satisfy the exact pushforward condition in Equation~\eqref{equ: pushforward}, the induced distribution $\hat\tau_{\theta_0} \# \bbP_{\theta_0}$ can still closely approximate the target distribution $\bbP^*_{\theta_0}$, allowing for approximately valid selective inference.

To quantify the discrepancy between the approximate and target distributions, we consider the Kullback-Leibler (KL) divergence. For two probability measures $\bbP$ and $\bbQ$ defined on a common measurable space, the KL divergence is given by
\begin{align*}
    \kl{\bbP}{\bbQ} = \EE[\bbP]{\log \frac{\rd\bbP}{\rd\bbQ} }.
\end{align*}
The following theorem provides a quantitative bound on the selective Type $\Rom{1}$ error in terms of the KL divergence between $\bbP_{\theta_0}^*$ and $\hat\tau_{\theta_0}\# \bbP_{\theta_0} $.



\begin{thm}[Type $\Rom{1}$ error bound]\label{thm: error bound}
    Consider testing the null hypothesis $H_0:\theta^{\ob{M}}=\theta_0$. 
    Let $\phi(T)$ be a valid level-$\alpha$ test under $\bbP_{\theta_0}$, and let $\hat\tau_{\theta_0}$ be an invertible and differentiable map such that $\kl{\bbP_{\theta_0}^*}{\hat\tau_{\theta_0}\# \bbP_{\theta_0}}\leq\ep$. 
    Then the selective Type $\Rom 1$ error of the approximate test $\phi\circ\hat\tau_{\theta_0}^{-1}(T)$ is bounded by
    \begin{align*}
        \EE[\bbP^*_{\theta_0}]{\phi\circ \hat\tau_{\theta_0}^{-1}(T) } \leq \alpha + \sqrt{\ep/2}.
    \end{align*}
\end{thm}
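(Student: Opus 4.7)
The plan is to reduce the bound to a comparison between two distributions of the transformed statistic $\widetilde{T} = \hat\tau_{\theta_0}^{-1}(T)$: its actual law when $T \sim \bbP^*_{\theta_0}$, and the law it would have if $T$ were instead drawn from the pushforward $\hat\tau_{\theta_0} \# \bbP_{\theta_0}$. The shape $\alpha + \sqrt{\varepsilon/2}$ strongly suggests Pinsker's inequality, so the strategy is: (i) show that under the pushforward, $\widetilde{T}$ is exactly $\bbP_{\theta_0}$-distributed, making the test valid there; (ii) control the deviation between this ideal situation and the actual one by a total variation bound; (iii) convert the KL hypothesis into a TV bound via Pinsker.

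First I would apply Lemma~\ref{lem:equivdefn} (with $\hat\tau_{\theta_0}$ in place of $\tau^*_{\theta^{\ob{M}}}$) to conclude that if $T \sim \hat\tau_{\theta_0} \# \bbP_{\theta_0}$, then $\widetilde{T} = \hat\tau_{\theta_0}^{-1}(T) \sim \bbP_{\theta_0}$. Since $\phi$ is a level-$\alpha$ test under $\bbP_{\theta_0}$, this gives
\begin{align*}
    \EE[\hat\tau_{\theta_0} \# \bbP_{\theta_0}]{\phi \circ \hat\tau_{\theta_0}^{-1}(T)} = \EE[\bbP_{\theta_0}]{\phi(\widetilde T)} \leq \alpha.
\end{align*}
The remaining work is to bound the gap between this expectation and the target $\EE[\bbP^*_{\theta_0}]{\phi \circ \hat\tau_{\theta_0}^{-1}(T)}$.

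Next, since $\phi$ takes values in $[0,1]$, the standard variational characterization of total variation distance yields
\begin{align*}
    \Big|\EE[\bbP^*_{\theta_0}]{\phi \circ \hat\tau_{\theta_0}^{-1}(T)} - \EE[\hat\tau_{\theta_0} \# \bbP_{\theta_0}]{\phi \circ \hat\tau_{\theta_0}^{-1}(T)}\Big| \leq \mathrm{TV}\big(\bbP^*_{\theta_0},\, \hat\tau_{\theta_0}\# \bbP_{\theta_0}\big).
\end{align*}
Then, applying Pinsker's inequality together with the hypothesis $\kl{\bbP^*_{\theta_0}}{\hat\tau_{\theta_0}\# \bbP_{\theta_0}} \leq \varepsilon$ gives
\begin{align*}
    \mathrm{TV}\big(\bbP^*_{\theta_0},\, \hat\tau_{\theta_0}\# \bbP_{\theta_0}\big) \leq \sqrt{\tfrac{1}{2} \kl{\bbP^*_{\theta_0}}{\hat\tau_{\theta_0}\# \bbP_{\theta_0}}} \leq \sqrt{\varepsilon/2}.
\end{align*}
Combining the two displays yields $\EE[\bbP^*_{\theta_0}]{\phi \circ \hat\tau_{\theta_0}^{-1}(T)} \leq \alpha + \sqrt{\varepsilon/2}$, as required.

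There is no substantive obstacle here: the argument is a routine application of Pinsker, and the only conceptual step is recognizing that the approximate map plays the role of an exact pushforward onto its own induced measure, so that validity holds perfectly there and only the KL gap between $\bbP^*_{\theta_0}$ and $\hat\tau_{\theta_0} \# \bbP_{\theta_0}$ needs to be paid. The one point that deserves care is ensuring $\phi$ is interpreted as a $[0,1]$-valued (possibly randomized) test so the TV bound on expectations is tight; if $\phi$ were only known to be $\{0,1\}$-valued the same argument goes through verbatim.
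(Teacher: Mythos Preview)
Your proposal is correct and follows essentially the same route as the paper: bound the difference of expectations of a $[0,1]$-valued test by total variation, then apply Pinsker's inequality. The only cosmetic difference is that the paper carries out the TV comparison on the pulled-back side (between $\hat\tau_{\theta_0}^{-1}\#\bbP^*_{\theta_0}$ and $\bbP_{\theta_0}$) and then invokes invariance of KL under diffeomorphisms, whereas you compare $\bbP^*_{\theta_0}$ and $\hat\tau_{\theta_0}\#\bbP_{\theta_0}$ directly; your version is marginally more direct since it sidesteps the invariance step.
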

\begin{proof}[Proof of Theorem~\ref{thm: error bound}]
    Since $\phi(T) \in [0, 1]$, the difference in expectations under two distributions can be bounded by their total variation distance:
    \begin{align*}
        \big|\EE[\hat\tau_{\theta_0}^{-1} \# \bbP^*_{\theta_0} ]{{\phi(T)} } - \EE[\bbP_{\theta_0}]{{\phi(T)} }\big|\leq \mathrm{TV}(\hat\tau_{\theta_0}^{-1}\# \bbP^*_{\theta_0}, \bbP_{\theta_0} ),
    \end{align*}
    where $\mathrm{TV}$ denotes the total variation distance between two probability distributions. 
    By Pinsker's inequality, 
    \begin{align*}
        \mathrm{TV}(\hat\tau_{\theta_0}^{-1}\# \bbP^*_{\theta_0}, \bbP_{\theta_0} )\leq \sqrt{\frac12 \kl{\hat\tau_{\theta_0}^{-1}\# \bbP^*_{\theta_0}}{\bbP_{\theta_0}} } =\sqrt{\frac12 \kl{\bbP^*_{\theta_0} }{\hat\tau_{\theta_0}\# \bbP_{\theta_0}} } \leq \sqrt{\ep/2},
    \end{align*}
    where the equality follows from the fact that $\hat\tau_{\theta_0}$ is invertible and differentiable, and that KL divergence is invariant under such transformations.
    Combined with the previous inequality, we obtain
    \begin{align*}
        \EE[\bbP^*_{\theta_0} ]{{\phi\circ \hat\tau_{\theta_0}^{-1}(T) } }=\EE[\hat\tau_{\theta_0}^{-1} \# \bbP^*_{\theta_0} ]{{\phi(T)} }  \leq \EE[\bbP_{\theta_0}]{{\phi(T)} } + \sqrt{\ep/2}\leq\alpha+\sqrt{\ep/2},
    \end{align*}
    where the last inequality holds since $\phi$ is a level-$\alpha$ test under $\bbP_{\theta_0}$.
\end{proof}

Theorem \ref{thm: error bound} establishes that if the KL divergence between the target measure $\bbP_{\theta_0}^*$ and $\hat\tau_{\theta_0}\# \bbP_{\theta_0} $ is at most $\ep$, then the selective Type~$\Rom{1}$ error is inflated by no more than $\sqrt{\ep/2}$.
This result provides both a theoretical guarantee for selective inference as well as a principled, data-driven criterion for learning the transport map, based on minimizing the KL divergence $\kl{\bbP^*_{\theta_0}} {\hat{\tau}_{\theta_0} \# \bbP_{\theta_0}}$. We describe this estimation procedure in Section~\ref{sec: algorithm}.

\subsection{Confidence sets}
\label{sec: confidence set}

Confidence sets for $\theta^{\ob{M}}$ with the selective coverage guarantee, as defined in \eqref{selective:coverage}, can be obtained by inverting the level-$\alpha$ tests established in Theorem~\ref{thm:test}. This leads to the following result.

\begin{proposition}
For each $\theta_0\in\Theta$, assume that the transport map $\tau^*_{\theta_0}$ satisfies the pushforward condition~\eqref{equ: pushforward}, and let $\phi_{\theta_0}\in\{0,1\}$ denote a level-$\alpha$ test for $H_0:\theta^{\ob{M}}=\theta_0$ under the distribution $\bbP_{\theta_0}$. 
Define the confidence set as
\begin{align}\label{equ: def confidence set}
    \calC^*(T) = \left\{\theta_0 \in\Theta: \phi_{\theta_0}\circ \tau_{\theta_0}^{*-1}(T) = 0 \right\}.
\end{align}
Then $\calC^*(T)$ achieves selective coverage at level $1-\alpha$, i.e., 
\begin{align*}
\bbP^*_{\theta^{\ob{M}}}\left[\theta^{\ob{M}}\in\calC^*(T) \right]\geq 1-\alpha.
\end{align*}
\label{prop:ci}
\end{proposition}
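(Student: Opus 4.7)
The plan is to use the standard duality between tests and confidence sets, leveraging Theorem~\ref{thm:test} at the true parameter value. Since the confidence set $\calC^*(T)$ is defined as the set of $\theta_0$ for which the pulled-back test $\phi_{\theta_0}\circ\tau_{\theta_0}^{*-1}$ fails to reject, the event $\{\theta^{\ob{M}}\in\calC^*(T)\}$ is by construction identical to the event $\{\phi_{\theta^{\ob{M}}}\circ\tau_{\theta^{\ob{M}}}^{*-1}(T)=0\}$.

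First, I would instantiate Theorem~\ref{thm:test} with $\theta_0$ taken to be the true (post-selection) parameter $\theta^{\ob{M}}$. The hypotheses of Proposition~\ref{prop:ci} provide exactly what is needed for this application: $\phi_{\theta^{\ob{M}}}$ is a level-$\alpha$ test under $\bbP_{\theta^{\ob{M}}}$, and $\tau^*_{\theta^{\ob{M}}}$ satisfies the pushforward condition \eqref{equ: pushforward}. Theorem~\ref{thm:test} then yields
\begin{align*}
\EE[\bbP^*_{\theta^{\ob{M}}}]{\phi_{\theta^{\ob{M}}}\circ\tau_{\theta^{\ob{M}}}^{*-1}(T)}\leq\alpha.
\end{align*}

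Next, since $\phi_{\theta_0}$ takes values in $\{0,1\}$, this expectation equals the probability of rejection, so passing to the complementary event gives
\begin{align*}
\bbP^*_{\theta^{\ob{M}}}\!\left[\phi_{\theta^{\ob{M}}}\circ\tau_{\theta^{\ob{M}}}^{*-1}(T)=0\right]\geq 1-\alpha.
\end{align*}
Combining this with the equivalence of events noted above gives the claim.

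There is no real obstacle here; this is essentially the test-inversion argument specialized to the conditional measure $\bbP^*_{\theta^{\ob{M}}}$. The only subtle point worth flagging explicitly is that the quantifier over $\theta_0$ in the definition of $\calC^*(T)$ must range over a set containing $\theta^{\ob{M}}$, so one implicitly assumes $\theta^{\ob{M}}\in\Theta$; this is a modeling assumption rather than a mathematical difficulty.
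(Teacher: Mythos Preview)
Your proposal is correct and follows essentially the same approach as the paper: both use the test--confidence set duality, instantiate the argument at the true parameter $\theta^{\ob{M}}$, and use the pushforward property of $\tau^*_{\theta^{\ob{M}}}$ to reduce the conditional rejection probability to the pre-selection rejection probability, which is at most $\alpha$. The paper's proof is slightly terser, writing the pullback identity $\bbP^*_{\theta^{\ob{M}}}[\phi_{\theta^{\ob{M}}}\circ\tau_{\theta^{\ob{M}}}^{*-1}(T)=1]=\bbP_{\theta^{\ob{M}}}[\phi_{\theta^{\ob{M}}}(T)=1]$ directly rather than routing through Theorem~\ref{thm:test}, but the content is the same.
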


\begin{proof}
The proof follows by noting that
\begin{align*}
    \bbP^*_{\theta^{\ob{M}}}\left[\theta^{\ob{M}}\in\calC^*(T) \right] &= 1-\bbP^*_{\theta^{\ob{M}}}\left[\phi_{\theta^{\ob{M}}}\circ\tau_{\theta^{\ob{M}}}^{*-1}(T)=1 \right]\\
    & = 1-\bbP_{\theta^{\ob{M}}}[\phi_{\theta^{\ob{M}}}(T)=1 ]\geq 1-\alpha.
\end{align*}
\end{proof}

Analogous to Theorem~\ref{thm: error bound}, we establish a lower bound on the selective coverage probability of confidence sets constructed using approximate maps $\hat\tau_{\theta_0}$. Specifically, if each $\hat\tau_{\theta_0}$ induces a distribution such that the KL divergence from the exact selective distribution $\bbP^*_{\theta_0}$ is at most $\epsilon$ for all $\theta_0 \in \Theta$, then the resulting confidence sets achieve near-nominal selective coverage.

\begin{thm}[Coverage probability bound]
    Suppose that for all $\theta_0\in\Theta$, the approximate map $\hat\tau_{\theta_0}$ satisfies  $\kl{\bbP_{\theta_0} ^* }{\hat\tau_{\theta_0} \# \bbP_{\theta_0}  }\leq\ep$. Then the selective coverage probability of the confidence set $\widehat{\calC}$ satisfies
    \begin{align*}
       \bbP^*_{\theta^{\ob{M}}}\left[\theta^{\ob{M}}\in \widehat\calC(T)\right]\geq 1-\alpha - \sqrt{\ep/2},
    \end{align*}
    where $\widehat{\calC}$ is defined analogously to $\calC^*$ in \eqref{equ: def confidence set}, with $\tau^*_{\theta_0}$ replaced by $\hat\tau_{\theta_0}$.
    \label{thm:coverage:prob}
\end{thm}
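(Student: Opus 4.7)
The plan is to obtain this coverage bound by inverting the Type~$\Rom{1}$ error bound from Theorem~\ref{thm: error bound}, in exact parallel with how Proposition~\ref{prop:ci} inverts Theorem~\ref{thm:test}. The construction of $\widehat{\calC}(T)$ as $\{\theta_0\in\Theta : \phi_{\theta_0}\circ\hat\tau_{\theta_0}^{-1}(T)=0\}$ gives the identity
\begin{align*}
\bbP^*_{\theta^{\ob{M}}}\left[\theta^{\ob{M}}\in\widehat{\calC}(T)\right] = 1-\EE[\bbP^*_{\theta^{\ob{M}}}]{\phi_{\theta^{\ob{M}}}\circ\hat\tau_{\theta^{\ob{M}}}^{-1}(T)},
\end{align*}
so the coverage question reduces to bounding the rejection probability of the single test indexed by the true post-selection parameter $\theta^{\ob{M}}$.

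The second step is to specialize Theorem~\ref{thm: error bound} to $\theta_0=\theta^{\ob{M}}$. By the hypothesis of the theorem, the KL bound $\kl{\bbP^*_{\theta_0}}{\hat\tau_{\theta_0}\#\bbP_{\theta_0}}\leq \ep$ holds for every $\theta_0\in\Theta$, and in particular for the specific value $\theta^{\ob{M}}$. Since $\phi_{\theta^{\ob{M}}}$ is a level-$\alpha$ test under $\bbP_{\theta^{\ob{M}}}$ (this is part of the construction of the inverted confidence set, already assumed in Proposition~\ref{prop:ci}), Theorem~\ref{thm: error bound} applies verbatim and yields
\begin{align*}
\EE[\bbP^*_{\theta^{\ob{M}}}]{\phi_{\theta^{\ob{M}}}\circ\hat\tau_{\theta^{\ob{M}}}^{-1}(T)}\leq \alpha+\sqrt{\ep/2}.
\end{align*}
Combining with the identity in the first step gives the claimed lower bound $1-\alpha-\sqrt{\ep/2}$.

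There is essentially no real obstacle here: the theorem is a corollary of Theorem~\ref{thm: error bound}, and the only subtlety is notational, namely confirming that the Type~$\Rom{1}$ bound obtained for a fixed null value $\theta_0$ transfers cleanly to $\theta_0=\theta^{\ob{M}}$ under the uniform-in-$\theta_0$ KL hypothesis. Because the hypothesis is assumed to hold for every $\theta_0\in\Theta$, this is immediate and does not require, for instance, handling $\theta^{\ob{M}}$ as random or any measurability issue beyond what is implicit in the statement of selective coverage~\eqref{selective:coverage}. The proof is therefore short and requires no calculation beyond invoking Theorem~\ref{thm: error bound} and the definition of $\widehat{\calC}$.
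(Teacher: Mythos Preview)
Your proposal is correct and matches the paper's approach exactly: the paper's proof simply states that the result follows directly from Theorem~\ref{thm: error bound}, and you have spelled out the (brief) details of that implication—writing the coverage as one minus the rejection probability at $\theta_0=\theta^{\ob{M}}$ and invoking the Type~$\Rom{1}$ error bound there via the uniform KL hypothesis.
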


\begin{proof}
The proof follows directly from Theorem~\ref{thm: error bound}.
\end{proof}

\subsection{Inference based on conditional density}
\label{sec: conditional density}

A transport map $\tau^*_{\theta^{\ob{M}}}$ in \eqref{equ: pushforward} also provides an expression for the selective density of the statistic $T$ under the conditional distribution $\bbP^*_{\theta^{\ob{M}}}$, as stated in Definition~\ref{def: pushforward}. Accordingly, an approximate map $\hat{\tau}_{\theta^{\ob{M}}} \approx \tau^*_{\theta^{\ob{M}}}$ yields an approximate selective density of the form
\begin{align}\label{push:forward:density}
    q^*_{\thetaobs}(t) = p_{\thetaobs}(\hat\tau^{-1}_{\thetaobs}(t) ) \cdot |\nabla \hat\tau^{-1}_{\thetaobs}(t) |.
\end{align}
This approximate selective density can now serve as the basis for conducting selective inference. In what follows, we discuss two approaches that directly use $q^*_{\theta^{\ob{M}}}$ for selective inference.

\paragraph{Selective MLE}
The approximate selective density~\eqref{push:forward:density} gives rise to the following negative log-likelihood function
\begin{align}\label{equ: selective log likelihood}
    \ell^*(\theta) := -\log q^*_{\theta}(T)=-\log p_{\theta}(\hat\tau_{\theta}^{-1}(T) ) - \log |\hat\nabla \tau_{\theta}^{-1}( T) |.
\end{align}
Minimizing the negative log-likelihood $\ell^*(\theta)$ yields a natural point estimate, the selective MLE, defined as
\begin{equation}
    \widehat\theta^{\ob{M}}_{\text{mle}} = \underset{\theta\in\Theta}{\argmin}\;  \ell^*(\theta).
\label{cond:MLE}    
\end{equation}

In the presence of external randomization, \cite{panigrahi2022approximate} obtained selective inference for $\theta^{\ob{M}}$ based on approximate normality of the selective MLE.
This approach constructs Wald-type intervals centered at the selective MLE, with variance estimated using the observed Fisher information matrix, both of which are computed from the conditional likelihood.
We can apply the same procedure here, using the selective likelihood density defined in Equation~\eqref{push:forward:density}.

As an example, suppose we wish to construct an equi-tailed confidence interval for the $j^{\text{th}}$ component of $\theta^{\ob{M}}$. 
We compute the selective MLE $\widehat\theta^{\ob{M}}_{\text{mle}}$ as in \eqref{cond:MLE}, and estimate the Fisher information matrix as 
$$
\widehat\calI=\calI(\widehat\theta^{\ob{M}}_{\text{mle}} ) = \nabla^2_\theta \; \ell^*(\theta)\, \Big\lvert _{\widehat\theta^{\ob{M}}_{\text{mle}}}.
$$
Based on the approximate normality of $\widehat\theta^{\ob{M}}_{\text{mle}}$, a $(1-\alpha)$ confidence interval for $\theta_j^{M_{\mathrm{o}}}$ is given by
\begin{align*}
   [\widehat\theta^{\ob{M}}_{\text{mle}}]_{j} \pm  \sqrt{[\widehat\calI^{-1}]_{j,j}} \cdot z_{1-\alpha/2},
\end{align*}
where $[\widehat\theta^{\ob{M}}_{\text{mle}}]_{j}$ is the $j^{\text{th}}$ component of the selective MLE, $[\widehat\calI^{-1}]_{j,j}$ is the $(j,j)^{\text{th}}$ component of the inverse Fisher information matrix $\widehat\calI^{-1}$, and $z_{1-\alpha/2}$ is the $(1-\alpha/2)$ quantile of the standard normal distribution.

\paragraph{Quantile-based inference}
If $\theta^{\ob{M}}$ is a scalar-valued parameter, one can perform selective inference based on the quantiles of the conditional density~\eqref{push:forward:density}. Concretely, suppose we want to test the null hypothesis $H_0: \theta^{\ob{M}} = \theta_0$ against one-sided or two-sided alternatives. 
In this case, valid p-values controlling the selective Type~$\Rom{1}$ error can be computed as
\begin{align*}
    \int_{-\infty}^T q_{\theta_0}^*(t)\rd t,\quad \int_T^\infty q_{\theta_0}^*(t)\rd t,\quad 2\cdot \min\left(\int_{-\infty}^T q_{\theta_0}^*(t)\rd t,\, \int_T^\infty q_{\theta_0}^*(t)\rd t\right),
\end{align*}
corresponding to left-tailed, right-tailed, and two-sided tests, respectively.

Analogous to Theorem~\ref{thm: error bound}, if $\hat\tau_{\theta_0}\# \bbP_{\theta_0}$ is close to $\bbP^*_{\theta_0}$ in KL divergence, then the p-values based on the corresponding densities are also close. Specifically, we have the bound:
\begin{align*}
    \bigg|\int_{-\infty}^T p_{\theta_0}^*(t)\rd t - \int_{-\infty}^T q^*_{\theta_0}(t)\rd t \bigg| \leq \mathrm{TV}(\bbP_{\theta_0}^*, \hat\tau_{\theta_0}\# \bbP_{\theta_0})\leq \sqrt{\frac12\kl{\bbP_{\theta_0}^*}{\hat\tau_{\theta_0}\# \bbP_{\theta_0}}}.
\end{align*}
If $\kl{\bbP_{\theta_0}^*}{\hat\tau_{\theta_0}\# \bbP_{\theta_0}}\leq\ep$, the approximate p-values computed using $q^*_{\theta_0}$ differ from the exact p-values by at most $\sqrt{\epsilon/2}$.


\section{Learning transport maps using normalizing flows}
\label{sec: algorithm}

We now describe our approach for learning the transport map $\tau^*_{\thetaobs}$ that pushes forward the pre-selection distribution $\bbP_{\thetaobs}$ to the conditional distribution $\bbP^*_{\thetaobs}$. We first present our method in the context of hypothesis testing, and subsequently extend it to constructing confidence sets and obtaining an approximation to the selective density.

\subsection{Minimizing the KL divergence}

Consider the problem of testing the null hypothesis $H_0:\theta^{\ob{M}}=\theta_0$.
As discussed in Section~\ref{sec: method}, an approximate transport map can be obtained by minimizing the KL divergence between the target conditional distribution $\bbP_{\theta_0}^*$ and the pushforward distribution $\hat\tau_{\theta_0} \# \bbP_{\theta_0}$. 
As shown in Theorem \ref{thm: error bound}, minimizing the KL divergence between the two distributions enables us to control the selective Type $\Rom 1$ error. This makes KL minimization not only a natural learning objective but also a theoretically justified approach for performing valid selective inference.

Using the formula for the density of $\hat\tau_{\theta_0}\# \bbP_{\theta_0}$ in \eqref{push:forward:density}, we note that the KL divergence can be expressed as
\begin{align*}
\kl{\bbP_{\theta_0}^*}{\hat\tau_{\theta_0} \# \bbP_{\theta_0}} 
&=\EE[\bbP_{\theta_0}^* ]{\log p_{\theta_0}^*(T) - \log p_{\theta_0} (\hat\tau_{\theta_0}^{-1}(T) ) - \log|\nabla \hat\tau_{\theta_0}^{-1}(T) |  }.
\end{align*}
To evaluate the expectation, we draw training samples $T^{(b)}$ for $1\leq b\leq B$ from the target distribution $\bbP^*_{\theta_0}$. Ignoring the constant term which doesn't depend on the transport map, we arrive at the empirical version of the objective function
\begin{align}
\label{equ: kl objective}
    \frac1B\sum_{b=1}^B - \log p_{\theta_0} (\hat\tau_{\theta_0}^{-1}(T^{(b)})) - \log|\nabla \hat\tau_{\theta_0}^{-1}(T^{(b)})|.
\end{align}
However, directly optimizing over arbitrary diffeomorphisms is generally intractable. In practice, we adopt a variational approach to solve this problem by parameterizing the transport map within a flexible class of functions, which we describe in the following section.

\subsection{Parametrizing transport maps via normalizing flows}
\label{sec: NF}

We propose to parametrize the transport maps using normalizing flows, a class of generative models that transform a simple reference distribution into a complex target distribution through a sequence of invertible, differentiable maps. In our setting, the reference distribution is the pre-selection distribution $\bbP_{\theta_0}$, and our target distribution is the conditional (post-selection) distribution $\bbP^*_{\theta_0}$.

Since the objective function~\eqref{equ: kl objective} involves only the inverse map, we directly parameterize the inverse transport map $\hat{\tau}^{-1}_{\theta_0}$ using a normalizing flow family $\{\eta_{\theta_0}(\cdot;\psi) : \psi \in \Psi\}$, where $\psi \in \Psi$ denotes the (unknown) flow parameters to be optimized.
This leads to the following optimization problem 
\begin{equation}
\widehat{\psi}= \underset{\psi\in\Psi}{\text{argmin}} \ \frac1B\sum_{b=1}^B - \log p_{\theta_0} (\eta_{\theta_0}(T^{(b)}; \psi)) - \log|\nabla\eta_{\theta_0}(T^{(b)}; \psi)|.
\label{opt:problem}
\end{equation}
The solution to \eqref{opt:problem} yields the approximate inverse transport map $\widehat{\tau}^{-1}_{\theta_0}(\cdot):=\eta_{\theta_0}(\cdot; \widehat{\psi})$.

Various flow architectures have been proposed in the literature on normalizing flows, offering different trade-offs between expressiveness and computational efficiency. In all examples presented in this paper, we employ the real-valued non-volume preserving (RealNVP) flows in \cite{dinh2017density}. This choice ensures that the Jacobian of the transformation—appearing in the second term of the objective in Equation~\eqref{opt:problem}—is a triangular matrix, allowing its determinant to be computed efficiently. Implementation details and parameterization of the RealNVP flows used in our experiments are provided in Appendix~\ref{sec: experiment details}.

To summarize, our approach to learning the transport map involves two main steps.
In Step 1, we generate training data $\{T^{(b)} \sim \bbP_{\theta_0}^*,\; 1\leq b\leq B\}$ from the conditional distribution under the null. To do so, we employ rejection sampling that is carried out as follows:
\begin{enumerate}
    \item[(i)] Generate $D^{(b)}\sim \bbP_{\theta_0}$ and $W^{(b)}\sim \bbQ$.
    \item[(ii)] Apply the selection algorithm $\widehat M$ to $(D^{(b)}, W^{(b)})$.
    \item[(iii)] If $\widehat M(D^{(b)}, W^{(b)})=\ob{M}$, accept $T^{(b)}=T(D^{(b)})$ as a sample.
\end{enumerate}

In Step 2, we parametrize the inverse transport map using the RealNVP flow, as described above, and solve the optimization problem in \eqref{opt:problem} over the flow parameter space via gradient descent. By combining the flow-based (inverse) transport map $\hat\tau_{\theta_0}^{-1}$ with Theorem~\ref{thm:test}, we obtain a valid testing procedure. Our approach is summarized in Figure~\ref{fig: pipeline}.

\begin{figure}[ht]
  \centering
  \setstretch{1}
  \begin{tikzpicture}[node distance=5.4cm, auto]
      \tikzstyle{block} = [rectangle, draw, 
                            minimum width=4.8cm, 
                            minimum height=2.5cm, 
                            text width=4.6cm,
                            align=center, font=\small]
      \tikzstyle{arrow} = [double equal sign distance, -implies] 

      \node[block] (A) { \textbf{Generating training samples} \\ \vspace{.1cm} Draw $(D^{(b)},W^{(b)})\sim (\bbP_{\theta_0}\times \bbQ)$; if $\widehat M(D^{(b)}, W^{(b)})=\ob{M}$, accept $T^{(b)}=T(D^{(b)})$ as a training sample. };

      \node[block, right of=A] (B) {\textbf{Learning transport map}\\ \vspace{.1cm} Optimizing flow parameter $\psi$ by solving problem~\eqref{opt:problem}.   };

      \node[block, right of=B] (C) {\textbf{Selective inference} \\ \vspace{.1cm} Test $H_0:\theta^{\ob{M}}=\theta_0$ by $\phi\circ \hat\tau_{\theta_0}^{-1}(T)$, where $\phi$ is a valid level $\alpha$-test under $\bbP_{\theta_0}$.};
      \draw[arrow] (A) -- (B);
      \draw[arrow] (B) -- (C);
  \end{tikzpicture}
  \caption{Pipeline of the proposed approach for hypothesis testing.}
  \label{fig: pipeline}
\end{figure}
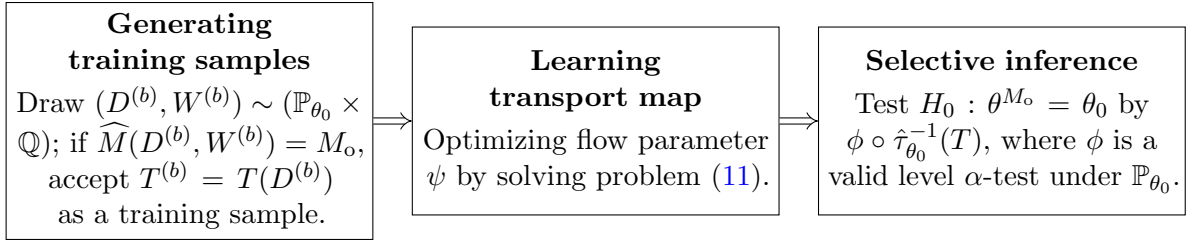

\subsection{Conditional normalizing flows for efficient inference}
\label{sec: conditional NF}

As discussed in Section \ref{sec: method}, confidence sets for the parameter $\theta^{\ob{M}}$ can be constructed by inverting hypothesis tests. 
We have described how to learn a transport map $\tau_{\theta_0}^*$ for a fixed parameter value $\theta_0\in\Theta$ by generating data from $\bbP_{\theta_0}^*$ and training a normalizing flow. 
However, repeating this process for many different parameter values can be computationally expensive. 
To reduce this cost, we propose a conditional normalizing flow approach that learns a single family of transport maps $\{\hat{\tau}_{\theta},\, \theta \in \Theta\}$, indexed by $\theta$, enabling efficient construction of confidence sets and more generally, estimation of the conditional density $p^*_\theta(T)$ across a range of parameter values.

Revisiting Theorem \ref{thm:coverage:prob}, one can construct confidence sets with near-nominal coverage by minimizing the KL divergence between the pushforward measure induced by $\hat\tau_{\theta}$ and the target distribution $\bbP^*_\theta$, for all $\theta \in \Theta$.
Toward this goal, we minimize the expected KL divergence:
\begin{align*}
    \EE[\theta^{\ob{M}}\sim \pi]{\kl{\bbP_{\theta^{\ob{M}}}^*}{\hat\tau_{\theta^{\ob{M}}} \# \bbP_{\theta^{\ob{M}}}}},
\end{align*}
where $\pi$ is a distribution supported on $\Theta$. In practice, this distribution is user-specified to generate plausible parameter values from the support set $\Theta$.

To optimize this objective, we consider a conditional normalizing flow family  $\{\eta(\cdot, \cdot; \psi),\, \psi \in \Psi\}$, where, for fixed $\psi$, $\eta(T, \theta; \psi)$ takes as input both the statistic $T$ and the parameter value $\theta$, with $\psi$ denoting the shared parameters of the conditional flow.
We train this model using paired samples $\{(T^{(b)}, \theta^{(b)}),\, 1\leq b\leq B\}$, where $\theta^{(b)} \sim \pi, \, T^{(b)} \mid \theta^{(b)} \sim \bbP_{\theta^{(b)}}^*$.
In our experiments, the distribution $\pi$ is chosen to be a multivariate Gaussian distribution, with details provided in Section \ref{sec: experiments}.
Given these training samples, we solve the optimization problem
\begin{equation}
\widehat{\psi}= \underset{\psi\in\Psi}{\text{argmin}} \ \frac1B\sum_{b=1}^B - \log p_{\theta^{(b)}} (\eta(T^{(b)}, \theta^{(b)}; \psi)) - \log|\nabla\eta(T^{(b)}, \theta^{(b)}; \psi)|,
\label{opt:problem:gen}
\end{equation}
yielding an estimate $\widehat{\psi}$ and the approximate inverse transport map $\widehat{\tau}^{-1}_{\theta^{\ob{M}}}(\cdot):=\eta(\cdot, \theta^{\ob{M}}; \widehat\psi)$, which can be used to construct confidence sets of $\thetaobs$ as described in Section~\ref{sec: confidence set}. The procedure for constructing confidence interval is summarized in Figure~\ref{fig: pipeline CI}.

\begin{figure}[ht]
  \centering
  \setstretch{1}
  \begin{tikzpicture}[node distance=5.4cm, auto]
      \tikzstyle{block} = [rectangle, draw, 
                            minimum width=4.8cm, 
                            minimum height=2.5cm, 
                            text width=4.6cm,
                            align=center, font=\small]
      \tikzstyle{arrow} = [double equal sign distance, -implies] 

      \node[block] (A) { \textbf{Generating training samples} \\ \vspace{.1cm} 
      Draw $\theta^{(b)}\sim\pi$ and $(D^{(b)},W^{(b)})\sim (\bbP_{\theta^{(b)}}\times \bbQ)$; if $\widehat M(D^{(b)}, W^{(b)})=\ob{M}$, accept $ (T(D^{(b)}), \theta^{(b)})$ as a training sample. };

      \node[block, right of=A] (B) {\textbf{Learning transport map}\\ \vspace{.1cm} Optimize the conditional flow parameter $\psi$ by solving problem~\eqref{opt:problem:gen}.   };

      \node[block, right of=B] (C) {\textbf{Selective inference} \\ \vspace{.1cm} 
      Construct confidence set for $\thetaobs$ as $\{\theta\in\Theta: \phi_{\theta}\circ \hat\tau^{-1}_{\theta}(T) = 0 \}$
      where $\phi_\theta$ is a valid level $\alpha$-test under $\bbP_{\theta}$.};
      \draw[arrow] (A) -- (B);
      \draw[arrow] (B) -- (C);
  \end{tikzpicture}
  \caption{Pipeline of the proposed approach for constructing confidence intervals.}
  \label{fig: pipeline CI}
\end{figure}
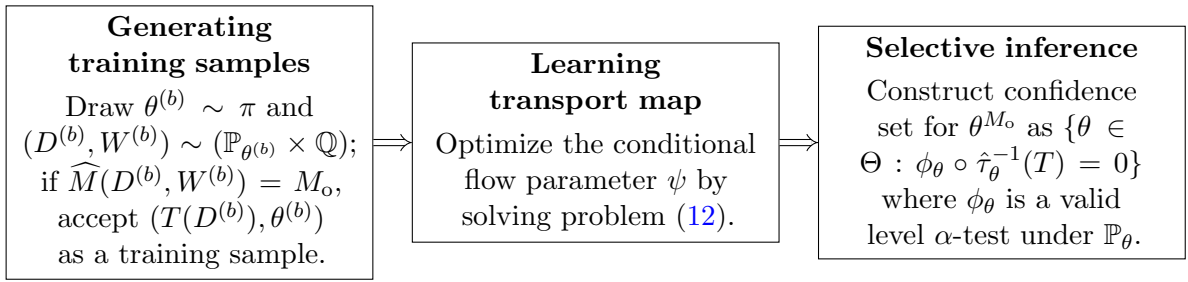

Additionally, the conditional normalizing flow provides a surrogate for the selective likelihood function defined in \eqref{equ: selective log likelihood}.
By substituting the learned inverse map $\widehat{\tau}^{-1}_{\theta}$ into the likelihood function, we obtain an approximation to the selective likelihood, which then facilitates maximum likelihood inference using the procedure described in Section~\ref{sec: conditional density}, or quantile-based inference for scalar-valued parameters.

In our description of the method, the transport maps use the pre-selection distribution as the reference distribution. However, as noted in the following remark, the reference distribution need not be the pre-selection distribution.
\begin{rmk}[Choice of reference distribution]\label{rmk: reference measure}
When training the normalizing flow, the reference distribution does not have to be the pre-selection distribution $\bbP_{\theta^{\ob{M}}}$. It can, in fact, be any probability distribution for which the log-density can be computed in a tractable form. In the variational inference and normalizing flow literature, the standard normal distribution $\N(0,I_d)$ is a common choice for the reference measure. However, using $\bbP_{\theta^{\ob{M}}}$ instead of $\N(0,I_d)$ may offer practical advantages. Since $\bbP_{\theta^{\ob{M}}}$ may be closer to the target distribution $\bbP_{\theta^{\ob{M}}}^*$ when the effect of selection is less, a simpler transformation may suffice to approximate the target in such examples, thereby making the transport map easier to learn and potentially improving training efficiency.
\end{rmk}


\section{Extensions}
\label{sec: extensions}

In this section, we describe extensions of our method to accommodate different scenarios. In particular, we discuss how to condition on extra information from the selection procedure to address issues such as nuisance parameters, and how to combine our method with existing methods for selective inference that provide corrections for parts of the selection procedure with analytically tractable selection events.

\subsection{Conditioning on extra information}




In selective inference, it can be beneficial to condition not only on the selection event $\{\widehat{M} = \ob{M}\}$, but also on additional statistics. Let $A = A(D)$ denote such a statistic. Then any inference procedure that is valid conditional on $\{\widehat{M} = \ob{M}, A = \ob{A}\}$ remains valid when conditioning only on $\{\widehat{M} = \ob{M}\}$, by the law of iterated expectation.

The additional conditioning on $A$ typically serves the following purposes. First, it can help eliminate nuisance parameters from the model, as done in the construction of uniformly most powerful unbiased (UMPU) tests for exponential families \citep{fithian2014optimal} or in forming the selective likelihood function within the M-estimation framework \citep{huang2023selective}. The approximate conditional density $q^*_{\theta}$~\eqref{push:forward:density} obtained from our approach can be easily used to implement this strategy for handling nuisance parameters.
Second, conditioning on additional information can improve the efficiency of the rejection sampling that we described in Section~\ref{sec: algorithm}. Since the selection event $\{\widehat M=\ob{M} \}$ holds for the observed data, additionally conditioning on part of the data, in this case $A=\ob A$, can increase the chance of re-selecting $\ob M$, thereby improving acceptance rates.
While marginalizing over $A$ may be an option, conditioning on its observed value often provides a more computationally efficient alternative, though possibly at the expense of some statistical power.

\sloppy{In our framework, conditioning on extra information requires only minor modification to our learning approach in Section \ref{sec: algorithm}. When generating the training samples, we draw datasets $D^{(b)}\sim \bbP_{\theta}(\cdot \mid A=\ob{A})$, and then accept a sample $T^{(b)}$ if $\widehat M(D^{(b)}, W^{(b)} ) = \ob {M}$.
Here is an example illustrating how this works.}
In a fixed $X$-regression setting, consider the linear model $y\sim \N(X_{\ob{M}}\beta_{\ob{M}}, \sigma^2 I_n)$, where the design matrix $X_{\ob{M}}$ is determined by the selection procedure $\widehat M$. In this case, additionally conditioning on the statistic $A= (I_n - X_{\ob{M}} (X_{\ob{M}}\tran X_{\ob{M}})^{-1} X_{\ob{M}}\tran )y$, which is independent of the least squares estimator, is likely to increase the probability of re-selecting the same model $\ob M$. Since $y$ follows a Gaussian distribution and $A$ is a linear function of $y$, simulating from $y^{(b)}\mid A=\ob{A}$ is straightforward. 
Similarly, when additional randomization is involved, such as a train-test split used during model fitting, we can condition on these randomization variables, in this case the random choice of split, at the time of inference.
Once the training samples are generated under this conditional distribution, the subsequent steps---training the normalizing flow and conducting inference---proceed exactly as before.

\subsection{Integrating with existing selective inference methods}
\label{sec: multiple queries}


In practice, data analysis pipelines may involve multiple stages of selection. Our proposed method is particularly well-suited for handling selection steps with intractable selection events, and at the same time, it can be easily combined with existing selective inference methods that apply to the tractable steps of the selection procedure.

As a motivating example, consider the well-studied problem of variable selection via the lasso. A long line of prior work, including \cite{lee2016exact, liu2023exact, panigrahi2024exact, kivaranovic2021length}, have studied selective inference methods after lasso selection with a fixed regularization parameter $\lambda$. However, in most applications, $\lambda$ is selected in a data-dependent manner (e.g., via cross-validation), introducing an additional layer of selection that is typically not accounted for.

To formally illustrate the above example, let $\widehat\lambda(D, W^{(1)})$ denote the selection procedure for the regularization parameter, and let $\widehat M^\lambda(D, W^{(2)})$ denote the lasso variable selection procedure with regularization parameter $\lambda$. 
Both procedures may involve randomization, represented by the randomization variables $W^{(1)}$ and $W^{(2)}$, respectively.
Suppose that for the observed data, the selected regularization parameter is $\ob{\lambda}=\widehat\lambda(\ob{D}, \ob{W}^{(1)} )$, and the subsequent lasso with regularization parameter $\ob\lambda$ selects the model $\ob{M}=\widehat M^{\ob \lambda}(\ob{D}, \ob{W}^{(2)} ) $. This results in the selected linear model $y\sim \N(X_{\ob M}\beta_{\ob M}, \sigma^2 I_n)$, where $X_{\ob M}$ is the design matrix composed of covariates indexed by $\ob M$. The test statistic is chosen to be the least squares estimator in the selected model $T=\widehat\beta_{\ob M}=(X_{\ob M}\tran X_{\ob M})^{-1}X_{\ob M}\tran y$. 
In line with existing methods for the lasso, we consider conditioning on the additional statistic $A = (I_n - \calP_{X_{\ob{M}}}) y$, where $\calP_{X_{\ob{M}}}$ denotes the projection onto the column span of $X_{\ob{M}}$, as well as on the sign vector $S$ of the lasso solution.
Conditioning on this extra information enables efficient computation of the lasso selection probability at a fixed value of the tuning parameter.

To perform valid inference for $\thetaobs$, the post-selection parameter of interest after selecting $\ob{M}$, we must now account for both steps of selection. This parameter is formally defined in \eqref{eqn: ps-target} in the following section.
The result below characterizes the conditional density of $T$, given the selected regularization parameter, the active variable set, and the associated additional statistics.
\begin{proposition}\label{prop: lasso}
The density of the conditional distribution of $T$  given \sloppy{$\{\widehat\lambda = \ob{\lambda}, \widehat M^{\ob{\lambda}} = \ob{M},A=\ob{A}, S=\ob{S} \}$}, evaluated at $t$, is given by
\begin{equation}\label{equ: lasso density factor}
p^*_{\theta^{\ob{M}}}(t) \propto  \underbrace{p_{\thetaobs}(t\mid \widehat\lambda=\lambda, A=\ob{A} ) }_{(I)} \cdot \underbrace{\PP{\widehat M^{\ob{\lambda}} =\ob{M}, S=\ob{S}\mid T=t, A=\ob{A}}}_{(II)},
\end{equation}
where $(\Rom{1})$ is the conditional density of $T$ given $\{\widehat\lambda=\lambda,A=\ob A\}$, and $(\Rom 2)$ is the lasso selection probability conditioned on both $T$ and $A$ at regularization parameter $\ob\lambda$.
\label{prop: cond:den:MQ}
\end{proposition}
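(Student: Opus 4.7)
The plan is to derive the factorization via a direct application of Bayes' rule, followed by a conditional independence argument that decouples the CV--selected $\widehat\lambda$ from the subsequent lasso selection at the fixed regularization level $\ob{\lambda}$.

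First, I would apply the definition of conditional density to the event of interest. Write the conditioning event as the intersection of $E_1 = \{\widehat\lambda = \ob{\lambda},\, A = \ob{A}\}$ and $E_2 = \{\widehat M^{\ob{\lambda}} = \ob{M},\, S = \ob{S}\}$. Then the conditional density of $T$ given $E_1 \cap E_2$ factorizes, up to a factor that does not depend on $t$, as
\begin{align*}
p^*_{\thetaobs}(t) \;\propto\; p_{\thetaobs}\bigl(t \mid E_1 \bigr)\cdot \bbP_{\thetaobs}\bigl[E_2 \mid T=t,\, E_1 \bigr].
\end{align*}
The first factor is exactly $(\Rom{1})$ in the claim. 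It remains to show that the second factor equals $(\Rom{2})$, i.e., that the event $\{\widehat\lambda = \ob{\lambda}\}$ may be removed from the conditioning without changing the probability.

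Second, I would establish the conditional independence of $\widehat\lambda$ from $(\widehat M^{\ob{\lambda}}, S)$ given $(T, A)$. The key observation is that $(T, A) = \bigl(\, (X_{\ob{M}}\tran X_{\ob{M}})^{-1} X_{\ob{M}}\tran y,\; (I_n - \calP_{X_{\ob{M}}})y\bigr)$ forms a bijective linear reparameterization of $y$, so conditioning on $(T, A)$ is equivalent to conditioning on $y$. Given $y$, the lasso solution at the fixed penalty $\ob{\lambda}$---and hence its active set and sign vector---is a function of $y$ and the randomization $W^{(2)}$ only, while $\widehat\lambda$ is a function of $y$ and the independent randomization $W^{(1)}$. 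Since $W^{(1)}$ and $W^{(2)}$ are mutually independent and independent of the data, conditioning further on $\widehat\lambda$ provides no additional information about $(\widehat M^{\ob{\lambda}}, S)$ beyond what is already carried by $y = (T, A)$. This yields
\begin{align*}
\bbP_{\thetaobs}\bigl[\widehat M^{\ob{\lambda}} = \ob{M},\, S=\ob{S} \mid T=t,\, A=\ob{A},\, \widehat\lambda=\ob{\lambda} \bigr] = \bbP\bigl[\widehat M^{\ob{\lambda}} = \ob{M},\, S=\ob{S} \mid T=t,\, A=\ob{A} \bigr],
\end{align*}
where the right-hand side no longer depends on $\thetaobs$ because, conditional on $y$, this probability is taken purely over the randomization $W^{(2)}$. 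Substituting this identity into the previous display gives the claimed factorization $p^*_{\thetaobs}(t) \propto (\Rom{1})\cdot(\Rom{2})$.

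The main obstacle, I anticipate, is formalizing the conditional independence step rather than the mechanical Bayes manipulation. One needs to fix a probability space on which $D$, $W^{(1)}$, and $W^{(2)}$ are jointly defined and mutually independent, to verify that the joint law of $(T, A, \widehat\lambda, \widehat M^{\ob{\lambda}}, S)$ admits the required regular conditional probabilities, and to check that dropping $\widehat\lambda$ from the conditioning is legitimate at the level of Radon--Nikodym densities (so that the proportionality above is interpreted correctly). Once this measure-theoretic bookkeeping is in place, the remaining steps reduce to direct calculations.
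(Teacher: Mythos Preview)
Your proposal is correct and follows essentially the same route as the paper's proof: a Bayes factorization into $p_{\thetaobs}(t\mid \widehat\lambda=\ob\lambda, A=\ob A)$ times $\bbP[\widehat M^{\ob\lambda}=\ob M, S=\ob S\mid T=t, A=\ob A,\widehat\lambda=\ob\lambda]$, followed by dropping $\widehat\lambda$ from the conditioning via the conditional-independence argument that lasso selection at fixed $\ob\lambda$ depends on the data only through $(T,A)$. Your version is slightly more explicit (the bijection $(T,A)\leftrightarrow y$ and the separate roles of $W^{(1)},W^{(2)}$), but the idea and the steps are the same.
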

\begin{proof}[Proof of Proposition~\ref{prop: lasso}]
We can write the conditional density of $T$ as
\begin{align*}
    p_{\thetaobs}^*(t) &= p_{\thetaobs}(t \mid\widehat\lambda=\ob\lambda, \widehat{M}^{\ob\lambda} = \ob M, A=\ob A, S=\ob{S} )\\
    &\propto p_{\thetaobs}(t\mid \widehat\lambda=\ob\lambda, A=\ob A )\cdot \PP{\widehat{M}^{\ob\lambda}=\ob M, S=\ob{S} \mid  T=t, A=\ob A, \widehat\lambda=\ob\lambda }.
\end{align*}
Because the lasso selection $\widehat{M}^{\lambda_0}$ is conditionally independent of $\widehat\lambda$ when conditioned on $T$ and $A$---lasso selection is completely characterized by $T$ and $ (I_n - \calP_{X_{\ob M}} )y$---we have
\begin{align*}
    \PP{\widehat{M}^{\ob\lambda}=\ob M, S=\ob{S} \mid  T=t, A=\ob A, \hat\lambda=\ob\lambda } = \PP{\widehat{M}^{\ob\lambda}=\ob M, S=\ob{S} \mid  T=t, A=\ob A}.
\end{align*}
The probability is over the possible randomness in the external randomness $W^{(2)}$, and does not depend on the model parameter $\thetaobs$. This finishes the proof.
\end{proof}

The second term $(\Rom{2})$ in the conditional density~\eqref{equ: lasso density factor} is the probability of the lasso selection (including the sign) with a fixed regularization parameter. The lasso selection event $\{\widehat M^{\ob\lambda}=\ob{M}, S=\ob{S} \}$ is known to be a polyhedron if no external randomization is applied. In this case, the selection probability is an indicator function and can be computed exactly \citep{lee2016exact}. When external randomization $W^{(2)}$ is introduced, the selection probability can be expressed as the probability of a Gaussian distribution over a polyhedral region.
Efficient Monte Carlo estimators for this probability have been developed in \cite{liu2023exact}, or the exact probability can be computed with some more additional conditioning in \cite{panigrahi2024exact}.

On the other hand, the first term $(\Rom 1)$, which is the conditional density of $T\mid \widehat\lambda=\ob\lambda, A=\ob A$, is harder to characterize, particularly when $\widehat\lambda$ is selected via some complex tuning procedure, such as cross-validation. This is precisely where our proposed method proves especially useful, offering a practical tool to account for the adaptive selection of the regularization parameter.
Specifically, we apply the proposed method to learn a transport map that pushes forward the pre-selection distribution $\bbP_{\thetaobs}$ to the conditional distribution of $T\mid \hat\lambda=\ob\lambda, A=\ob A$. The transport map provides an approximation of the conditional density $(\Rom 1)$, analogous to the expression given in Equation~\eqref{push:forward:density}. By combining the approximate conditional density $(\Rom 1)$ with the computed lasso selection probability $(\Rom 2)$, we obtain a valid selective inference procedure that accounts for both steps of selection. 

We present this example in Section~\ref{sec: lassocv} and demonstrate how our method, combined with existing approaches, delivers valid selective inference on the full dataset. For now, we illustrate the conditional density $(\Rom 1)$ and the lasso selection probability $(\Rom 2)$ in Figure~\ref{fig: random lassocv demo}, highlighting how our method successfully corrects for the complex process of tuning parameter selection.
The left panel plots the conditional density of $(\Rom 1)$ (blue solid line), computed using the learned transport map, alongside the pre-selection density of $\hat\beta_1$ (orange dashed line), where $\hat\beta_1$ denotes the least squares estimate for a variable included in the selected model. The middle panel shows the lasso selection probability $(\Rom 2)$ as a function of $\hat\beta_1$, computed using the Monte Carlo method of \cite{liu2023exact}. In the right panel, the blue solid line represents the product of $(\Rom 1)$ and $(\Rom 2)$---that is, the estimated conditional density accounting for both stages of selection---while the orange dashed line represents the pre-selection density multiplied by $(\Rom 2)$, corresponding to inference that ignores the data-adaptive nature of $\lambda$. 

The red dotted vertical line in the plots indicates the observed value of $\hat{\beta}_1$. 
Under the orange curve in the right panel, the observed value lies to the right of the 0.99 quantile, leading to a false rejection of the null hypothesis $\beta_1 = 0$. In contrast, under the blue curve—which incorporates the selection of $\lambda$—the observed value falls within the 0.95 quantile, and the null is not rejected. This example illustrates that failing to account for the selection of the regularization parameter leads to invalid inference for the regression coefficients in the selected model.

\begin{figure}
  \centering
  \includegraphics[width=.9\textwidth]{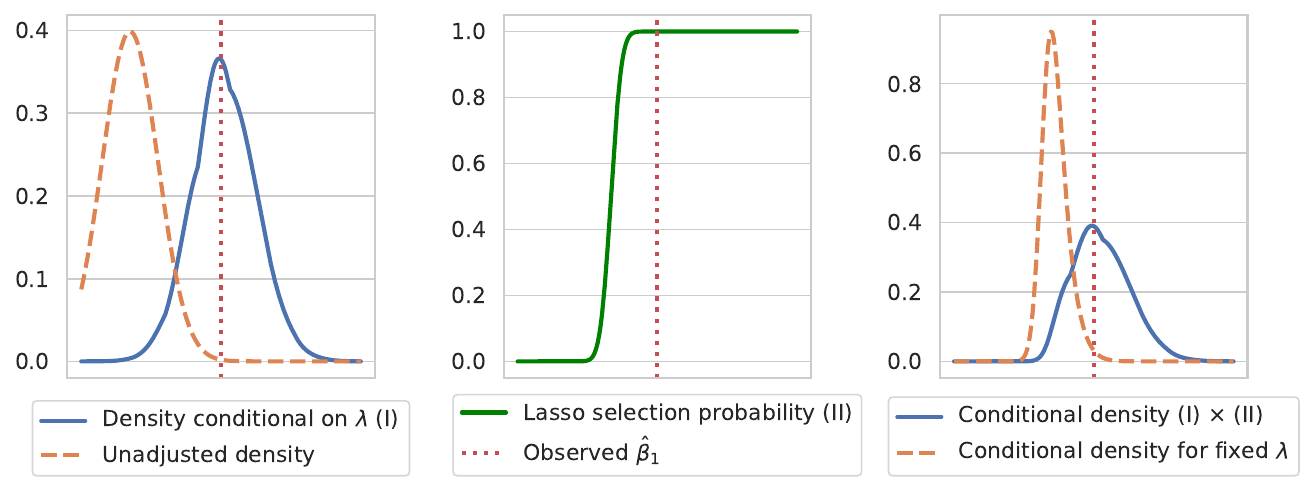}
  \caption{Left: pre-selection density (orange dashed) and conditional density $(\Rom 1)$ of $\hat\beta_1 \mid \hat\lambda = \ob{\lambda}, A = \ob{A}$ (blue solid), obtained via the learned transport map. Middle: lasso selection probability $(\Rom 2)$, estimated using \citet{liu2023exact}. Right: conditional density ignoring $\lambda$ selection (orange dashed) vs. full conditional density accounting for both selection steps (blue solid). The red vertical line marks the observed value of $\hat\beta_1$.}
  \label{fig: random lassocv demo}
\end{figure}

\section{Applications}
\label{sec: experiments}

In this section, we illustrate the performance of our proposed method on several simulation experiments and a single-cell data analysis. In all the examples below, we fit a generalized linear model (GLM) to the data, using a feature matrix $Z_{\ob{M}}\in\R^{n\times d}$ that is constructed adaptively based on the observed data, treating the response $y\in\R^n$ as random and the predictor matrix $X$ as fixed. 
We consider four different selection procedures for constructing the feature matrix:
\begin{enumerate}
    \item In Section~\ref{sec: polynomial experiment}, the feature matrix is constructed using a polynomial basis expansion, where the degree of the polynomial fit is selected based on an analysis of variance (ANOVA) criterion.
    \item In Section~\ref{sec: spline experiment}, the feature matrix consists of natural cubic splines, where the number of knots in the nonlinear fit is selected via cross-validation (CV).
    \item In Section~\ref{sec: lassocv}, the feature matrix is composed of a subset of predictors selected by the lasso, with the regularization parameter first chosen via CV.
    \item In Section~\ref{sec: pcr experiment}, the feature matrix consists of the principal components of the predictor matrix, following the standard approach in principal component regression (PCR), with the number of principal components chosen via CV. 
\end{enumerate}

Given the adaptively constructed feature matrix $Z_{\ob{M}}$, the selected GLM has density given by
\begin{align*}
    p_{\ob M}(y\mid Z_{\ob M}, \beta_{\ob M}) = \exp \Big(\frac{y\tran Z_{\ob M}\beta_{\ob M} - A(\beta_{\ob M})}{\sigma^2} \Big)\cdot h(y;\sigma),
\end{align*}
where $\beta_{\ob M}\in\R^d$ is the vector of regression coefficients. We assume that the dispersion parameter $\sigma^2$ is either known or can be consistently estimated from the data, allowing us to use this estimate as a plug-in value in our inference procedure. The examples in  Sections~\ref{sec: polynomial experiment}, \ref{sec: spline experiment}, and \ref{sec: lassocv} involve a Gaussian linear model, and the example in Section~\ref{sec: pcr experiment}  considers a logistic regression model.

Following \citep{berk2013valid, lee2016exact}, the target post-selection parameter in our simulations is defined as the best linear regression coefficients in the selected model, and is given by
\begin{align}
    \thetaobs = \underset{\beta\in\R^{d}} {\argmin} \; \EE{\sum_{i=1}^n \ell(\beta;y_i, Z_{\ob M, i} ) },
    \label{eqn: ps-target}
\end{align}
where $\ell(\beta; y, Z) = A(\beta) - y Z^\top \beta$ is the negative log-likelihood of the selected linear model. 
The expectation in \eqref{eqn: ps-target} is taken with respect to the true data-generating distribution.
Note that the selected model $\ob M$ is treated as a working model, and we make no assumptions about the quality of the selection, allowing $\ob M$ to be potentially misspecified.

In the proposed method, when constructing confidence intervals for individual parameters, we train a conditional normalizing flow as described in Section~\ref{sec: conditional NF}. If the goal is to test a specific null hypothesis (as in Section~\ref{sec: spline experiment}), we instead train an unconditional normalizing flow as outlined in Section~\ref{sec: NF}.
The normalizing flows are parametrized using RealNVP~\citep{dinh2017density} with 12 affine coupling layers. Each coupling layer involves a shift and a scaling parameter, both of which are outputs of neural networks with 1 hidden layer containing 8 neurons. For conditional normalizing flows, these neural networks take the parameter value $\theta$ as an additional input.
For each simulation, we generate 2,000 training samples and 500 validation samples. The normalizing flows are trained using full-batch Adam with learning rate $10^{-4}$ for 10000 iterations. 
The validation loss---KL divergence computed on the validation set---is computed every 1000 iterations, and the flow parameter corresponding to the lowest validation loss is selected as the final parameter. More details about the normalizing flow and the training procedure are provided in Appendix~\ref{sec: experiment details}. Code for reproducing the experiments is available at \url{https://github.com/liusf15/transport_selinf}.

\subsection{Selecting the degree of polynomial fit}
\label{sec: polynomial experiment}

When fitting a polynomial regression model with $y$ as the response and $x$ as the predictor, the degree of the polynomial is typically unknown a priori.
To determine this unknown degree, we apply an ANOVA procedure, in which we start from a polynomial of degree 0, and sequentially increase the degree in a stepwise manner. Suppose $\mathbb{M}_p$ is the model with polynomial degree $p$. An F-test is used to compute the p-value for comparing the model $\mathbb{M}_p$ with the model $\mathbb{M}_{p+1}$, treating the smaller model as the null hypothesis. If the p-value is larger than 0.05, we stop and choose $\mathbb{M}_p$ as the final model; otherwise, we continue the procedure to compare $\mathbb{M}_{p+1}$ and $\mathbb{M}_{p+2}$. The procedure is stopped when the maximum degree is reached.
For a detailed description of this procedure, we refer the readers to \cite[Chapter 7]{james2013introduction}.

When a degree $p > 0$ is selected, we call the selected model $\ob{M}$ and construct the corresponding $n \times (p + 1)$ feature matrix $Z_{\ob{M}}$, where the $k$-th column contains $x^k$ for $0 \leq k \leq p$. We then fit a least squares regression using this feature matrix and construct confidence intervals for each regression coefficient in the model.
In this simulation, we set $n=100$ and generate $x_i\iid\N(0,1)$, $y_i\sim \N(f^*(x_i), 1)$ ($1\leq i\leq n$), where $f^*(x_i) =c\cdot (x_i^3 + x_i^4)$ with the signal strength $c$ varied over $\{0, 0.1, 0.2\}$. We refer to these as signal regimes SR-$1$, SR-$2$, SR-$3$, respectively. We set the maximum degree of the polynomial fit to be 4. 

Our method enables valid inference in the selected model by explicitly accounting for the selection of the polynomial degree $p$ from the ANOVA procedure. 
We compare the proposed method to the following approaches:
\begin{enumerate}
\item[(i)] \textbf{Na{\"i}ve}: which performs inference based on the least squares estimator from the selected model, but is clearly invalid as it ignores the data-adaptive choice of $p$. 
\item[(ii)] \textbf{Splitting}: which uses the UV method by \cite{rasines2021splitting}. The UV method performs selection on a perturbed version of the response, in this case $y + W$, where $W \sim \mathcal{N}(0, \nu^2 I_n)$ for a pre-specified value of $\nu^2$. It then conducts inference on $y-\frac{\sigma^2}{\nu^2}W$, the holdout portion of the response, which is independent of the data used for selection. 
This approach is a variant of data splitting for fixed $X$ regression, and falls under the broader class of data fission or thinning schemes described in \cite{leiner2025data, dharamshi2025generalized}. 
\end{enumerate}
\begin{figure}
    \centering
    \includegraphics[width=\textwidth]{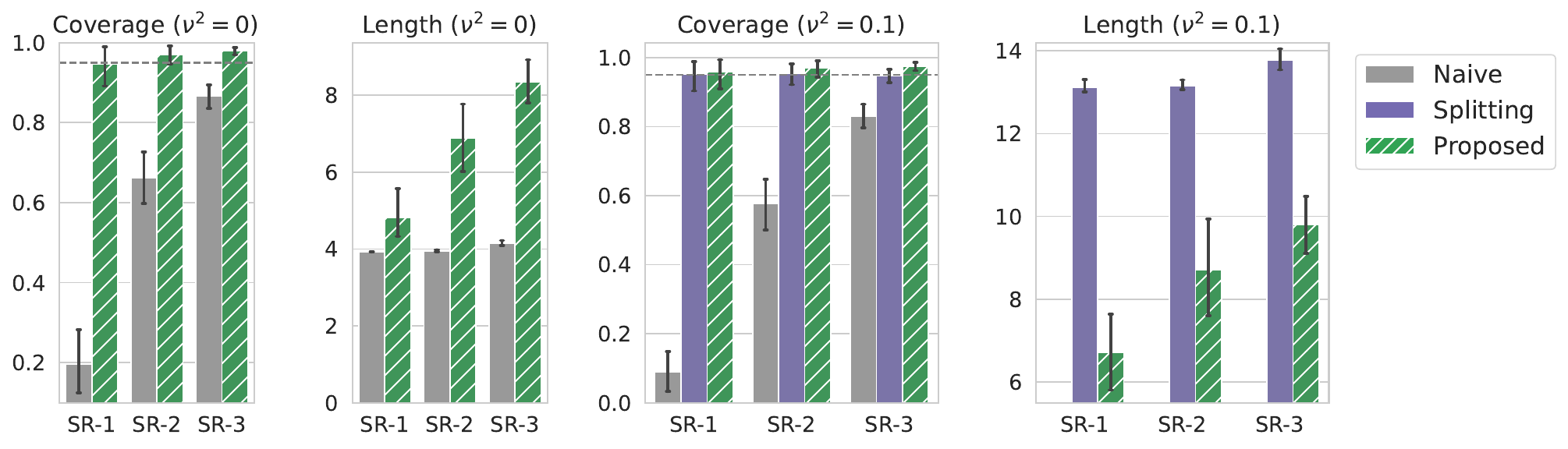}
    \caption{Coverage proportions and average interval lengths for the coefficients in polynomial regression, where the polynomial degree is selected by an ANOVA criterion. The first two plots show the coverage and interval lengths for non-randomized selection with $\nu^2=0$, and the last two plots are for $\nu^2=0.1$. The dashed line indicates the target coverage probability 0.95.
    }
    \label{fig: polynomial}
\end{figure}

We consider two settings for our simulations: $\nu^2 = 0$ (no randomization) and $\nu^2 = 0.1$ (a small amount of randomization).
In the latter case, the polynomial fit is expected to be very close to that of the non-randomized procedure based on the full data. While our approach applies to both randomized and non-randomized selection, the splitting approach is applicable only to the latter case.

Figure~\ref{fig: polynomial} shows the coverage probabilities and interval lengths based on 2000 simulation repetitions for each signal regime. A non-zero degree is selected in approximately 100 simulations for SR-1, and in about 400 simulations for SR-3.
As can be seen from the empirical coverage probabilities in this plot, na{\"i}ve inference  leads to severe under-coverage, whether or not randomization is used in the selection procedure. 
In contrast, our method produces valid confidence intervals that achieve the nominal coverage probability across all scenarios.
In the case where $\nu^2 = 0.1$, our method, following the principles of data carving, uses the full data for inference and, as a result, produces significantly shorter confidence intervals for the post-selection parameters. In particular, the splitting intervals are nearly $1.5-2$ times longer than those produced by our data-carving method.

\subsection{Selecting the number of knots}
\label{sec: spline experiment}

This example extends the analysis from Section \ref{sec: setup}, where we fit regression splines to the data.
The number of knots $K$ in the fit is a hyperparameter, which is selected using a CV procedure. 
As described in Section \ref{sec: setup}, we test the null hypothesis that there is no association between $y$ and $x$. However, since $K$ is selected adaptively based on the data, the classical F-test for this task fail to control the Type~$\Rom{1}$ error.

We select the number of knots from $\{2,3,4,5\}$ using a 10-fold CV procedure. We denote by $\ob{M}$ the model corresponding to the selected $K$, which uses the feature matrix $Z_{\ob{M}}$ consisting of $(K + 1)$ basis functions along with an intercept. We then perform inference using the nonlinear model fitted with this adaptively constructed feature matrix. For our simulations, we fix $n=100$, and generate the predictors $x_i\iid \unif(0,1)$ for $1\leq i\leq n$. The response variables are then generated as $y_i\sim \N(f^*(x_i), 1)$ where $f^*(x)$ is the natural cubic spline function with 3 knots positioned at the quartiles. The coefficients for the 4 basis functions are given by $c \cdot (1, 1, -1, 1)$, with the signal strength $c$ varying over the set $\{0, 0.1, 0.2, 0.3\}$. This leads to four signal regimes, which we denote by SR-$0$, SR-$1$, SR-$2$, and SR-$3$, respectively.
When $c=0$, $f^*$ does not depend on $x$ and the null hypothesis is true. When $c>0$, the alternative hypothesis is true.

To test the null hypothesis that $y$ has no significant association with $x$ using our data-carving method, we follow the pipeline outlined in Figure~\ref{fig: pipeline}. In this case, we choose the standard normal distribution as the reference distribution. Specifically, we obtain a transport map $\hat\tau$ such that $\hat\tau\#\N(0, I_d)$ approximates $\bbP^*_{0}$, the conditional distribution of the least squares estimator $T$ under the null. We then compute the pulled-back statistic $\hat\tau^{-1}(T)$, using the inverse transport map, and reject the null hypothesis if $\|\hat\tau^{-1}(T)\|_2^2 > \chi^2_{(K+1), 1-\alpha}$, where $\chi^2_{(K+1), 1-\alpha}$ is the $1-\alpha$ quantile of the chi-squared distribution with $K+1$ degrees of freedom. 

In line with the previous example, we conduct our simulations under two types of selection procedures: one without additional randomization and one with randomization, corresponding to $\nu^2 = 0$ and $\nu^2 = 0.1$, respectively. We compare our method to the invalid na{\"i}ve approach, which performs a classical F-test to compare the intercept-only model with the selected linear model, without accounting for the prior selection of $K$. When $\nu^2 = 0.1$, we also compare our method to the data-splitting approach, which applies na{\"i}ve  inference to the holdout portion of the response, as described in Section~\ref{sec: polynomial experiment}.

 Figure~\ref{fig: spline} presents the empirical selective Type~$\Rom{1}$ error and the power of the tests from 500 simulation runs. As expected, our method offers valid control on the Type $\Rom 1$ error, and achieves substantially higher power than the splitting method.

\begin{figure}
    \centering
    \includegraphics[width=\textwidth]{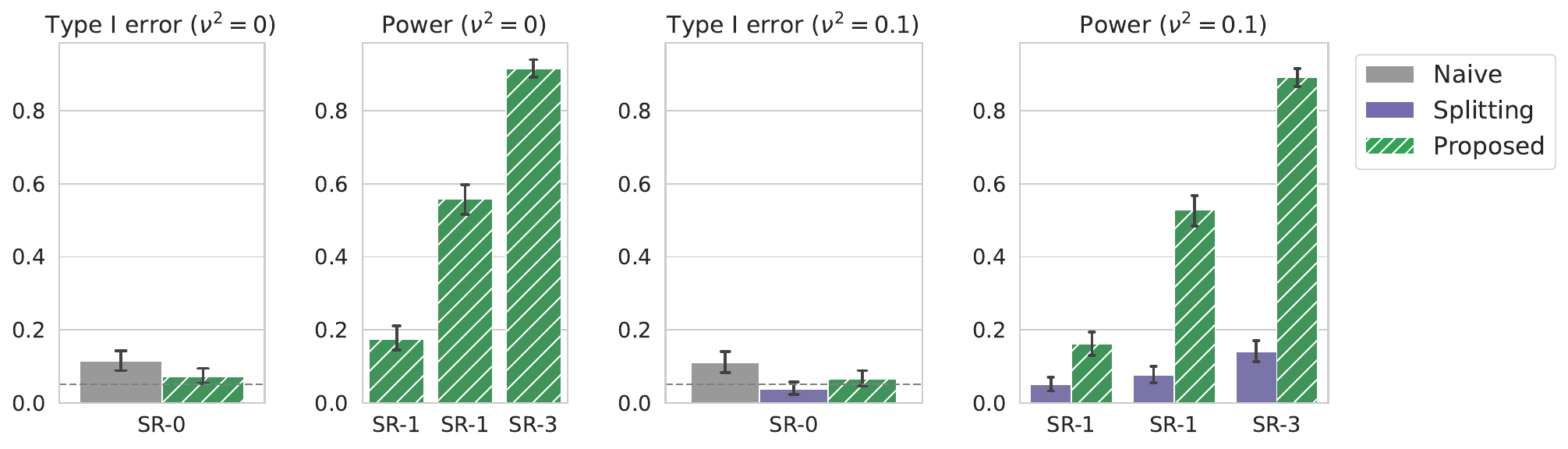}
    \caption{Results for testing the dependence between $y$ and $x$ by fitting a spline regression model, where the number of knots is chosen by CV. The first two plots show the Type $\Rom 1$ error and power for the non-randomized selection with $\nu^2=0$, and the last two plots are for $\nu^2=0.1$. The dashed line indicates the nominal Type $\Rom 1$ error 0.05. }
    \label{fig: spline}
\end{figure}

\subsection{Lasso with cross-validated regularization parameter}
\label{sec: lassocv}

We consider the example of the lasso with a cross-validated regularization parameter, following the discussion in Section~\ref{sec: multiple queries}.
Given a regularization parameter $\lambda>0$, the lasso solves the optimization problem
\begin{align*}
    \hat\beta^\lambda = \underset{\beta\in\R^p}{\argmin}\; \frac12\|y - X\beta\|_2^2 + \lambda\|\beta\|_1,
\end{align*}
where $y\in\R^n$ and $X\in\R^{n\times p}$ denote the response vector and the predictor matrix, respectively. The $\ell_1$-penalty induces sparse solutions, leading to a selected set of variables:
\begin{align*}
    \widehat M^\lambda(y, X) = \{j\in[p]: \hat\beta_j^\lambda\neq 0\}.
\end{align*}
Given that we observe $\ob M= \widehat M^\lambda(y, X)$, we perform inference in the linear model $y\sim \N(Z_{\ob M}\beta_{\ob M}, \sigma^2 I_n)$, where $Z_{\ob M} = X_{\ob M}$ is the submatrix of $X$ corresponding to the variables in the subset $\ob M$.

Inference after the lasso at a fixed regularization parameter has been extensively studied in the selective inference literature, with many approaches proposed to tackle this problem. If one conditions additionally on the sign of $\hat\beta^\lambda$, then \cite{lee2016exact} showed that the selection event can be characterized as a polyhedron, and in this case, selective inference can be based on the distribution of a univariate truncated normal variable. But this method often leads to excessively conservative confidence intervals, as shown by \cite{kivaranovic2021length}. 

To increase power and avoid infinitely long confidence intervals, the randomized lasso of the form
\begin{align}\label{equ: random lasso}
    \hat\beta^\lambda =\argmin_{\beta\in\R^p} \frac12\|y - X\beta\|_2^2 + \lambda\|\beta\|_1 - W\tran\beta,
\end{align}
was used in \cite{panigrahi2022approximate, liu2023exact, panigrahi2024exact}, where $W\sim \N(0, \nu^2 X\tran X)$ is a $p$-dimensional randomization variable, and $\nu^2$ controls the randomization level. 
Conditional on the sign of $\hat\beta^\lambda$, the distribution of the least squares estimator in the selected model is a soft-truncated normal distribution, after marginalizing over all or part of the randomness in $W$.
Here, we consider two selective inference methods after solving \eqref{equ: random lasso} at fixed $\lambda$: (i) the separation-of-variable (SOV) method by \cite{liu2023exact}, which computes the quantiles of the soft-truncated distribution with Monte Carlo sampling; (ii) the bivariate normal (Bivnormal) method by \cite{panigrahi2024exact}, which computes an exact pivot by conditioning on additional information.

A closely related randomization scheme involves solving the lasso using a randomized response $y + W$, where $W \sim \mathcal{N}(0, \nu^2 I_n)$. This is introduced by \cite{tian2018selective} and is equivalent to solving the randomized lasso in \eqref{equ: random lasso}.
Using the variables selected with the noisy response vector, the UV method of \cite{rasines2021splitting} can be used to obtain splitting-type inference based on $y - \dfrac{\sigma^2}{\nu^2} W$.

In our simulations, we set $n = 100$ and $p = 20$. The predictor matrix $X\in\R^{n\times p}$ is generated as $X_i\iid\N(0, \Sigma_X)$ for $1\leq i\leq n$, where $\Sigma_{X,jk}=0.9^{|j-k|}$, and the response vector $y$ is generated from $N(0,I_n)$, an all-noise model.
We perform a 10-fold CV to select $\lambda$ over a pre-specified grid ranging from $0.01\sqrt{{\log p}/{n}} $ to $5\sqrt{{\log p}/{n}} $, equally spaced on the log scale. 
We consider both non-randomized ($\nu^2=0$) selection and selection with a small amount of randomization ($\nu^2=0.1$). When $\nu^2=0.1$, both cross-validation and lasso selection are performed using the perturbed data $y+W$, where $W\sim\N(0, \nu^2 I_n)$.

Obviously, the na{\"i}ve approach which assumes $\hat\beta \sim \N(\beta_{\ob M}, \Sigma)$ and ignores all selection steps preceding inference, is clearly invalid.
Furthermore, we consider the following methods that treat $\ob\lambda$ as fixed and offer only partial corrections for the variable selection process:
\begin{enumerate}
    \item \textbf{Polyhedral}: inference is based on the truncated normal distribution of $\hat\beta$ conditioned on the lasso selection event \citep{lee2016exact}. 
    \item \textbf{Bivnormal}: quantiles of $\hat\beta_j$ are computed based on a bivariate normal distribution \citep{panigrahi2024exact}, which marginalizes over an appropriately-chosen linear projection of $W$.
    \item \textbf{SOV}: quantiles of $\hat\beta_j$ are computed using the sampling method from \cite{liu2023exact}, which marginalizes over all the randomness in $W$.
\end{enumerate}

\begin{figure}
    \centering
    \includegraphics[width=\textwidth]{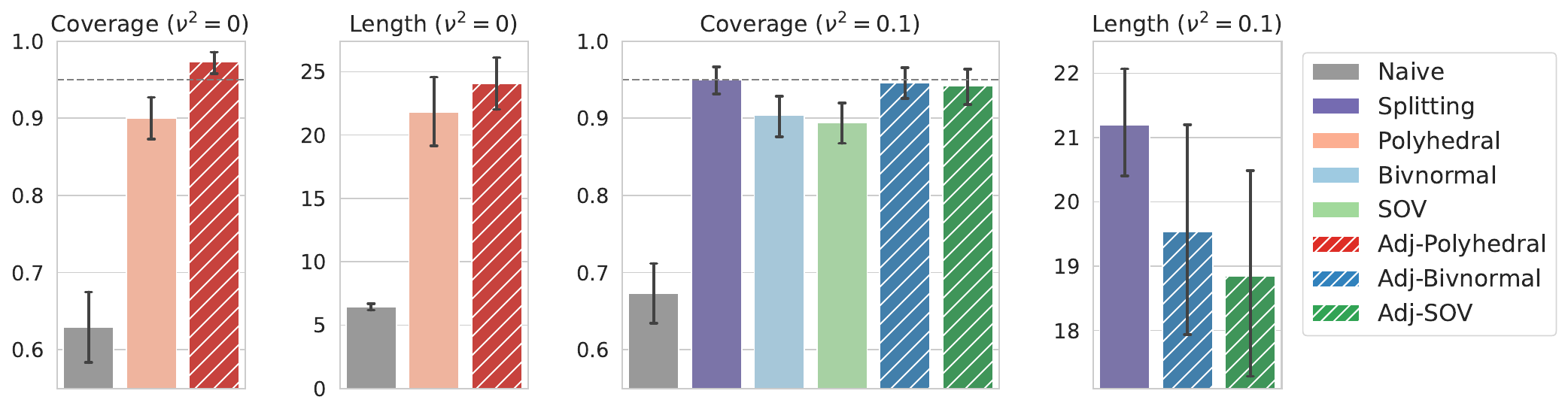}
    \caption{Lasso with cross-validation. The first two panels show the coverage proportions and interval lengths for the scenario $\nu^2=0$, while the last two panels are for $\nu^2=0.1$. The horizontal dashed line indicates the target coverage probability 0.95. }
    \label{fig: lasso results}
\end{figure}

These methods are partially invalid because they ignore the selection of $\ob \lambda$. We can adjust for the selection of $\ob\lambda$ by combining these methods with our proposed method, as described in Section~\ref{sec: multiple queries}.
These adjusted methods are referred to as \textbf{Adj-Polyhedral}, \textbf{Adj-Bivnormal}, \textbf{Adj-SOV}.
To implement \textbf{Adj-Polyhedral}, we multiply the conditional density for the $j^{\text{th}}$ regression coefficient---accounting for the selection of $\lambda$ via our method, that corresponds to the first term $(\Rom 1)$ in Proposition~\ref{prop: lasso}---with $1_{[I^j_1,I^j_2]}(t)$, where  $[I^j_1, I^j_2]$ denotes the truncation interval obtained with \textbf{Polyhedral}. For \textbf{Adj-Bivnormal}, we multiply this conditional density, obtained using the learned transport map, with $H(t)=\Phi(a^jt + b^j_2)- \Phi(a^jt+ b^j_1)$, where $a^j, b^j_1, b^j_2$ are scalars obtained with \textbf{Bivnormal}. Similarly, \textbf{Adj-SOV} is obtained by multiplying this conditional density, based on the learned transport map, with an estimate of the lasso selection probability using Monte Carlo sampling obtained with \textbf{SOV}.
In the case when $\nu^2 = 0.1$, we include the splitting-based UV method, which performs inference using simply the holdout data $y-\frac{\sigma^2}{\nu^2} W $. 

The simulation is repeated 1000 times, out of which about 270 times the selected model is nonempty. If the selected model is nonempty, we compute the average coverage proportions of all the selected coefficients and the average interval lengths. The results are reported in Figure~\ref{fig: lasso results}.
Methods that do not account for the selection of $\ob\lambda$ consistently exhibit under-coverage. However, when combined with our proposed approach, all these methods---both with and without additional randomization---are able to achieve the target 95\% coverage probability. As expected, the valid methods produce wider intervals, taking into account the additional uncertainty from the selection of the regularization parameter. 
Consistent with findings from the literature on randomized selective inference, the data-carving intervals from \textbf{Adj-Bivnormal} and \textbf{Adj-SOV}, obtained using the randomized lasso, are shorter than those from the other valid approaches, including \textbf{Splitting} and \textbf{Adj-Polyhedral}.

\subsection{Selecting the number of principal components}
\label{sec: pcr experiment}

Below, we consider performing a principal component regression (PCR) analysis with a Bernoulli random variable, where the number of principal components (PCs) is first selected via CV. 
Let $K$ denote the number of PCs, which we select using a 5-fold CV on the data. Let $Z_{\ob{M}} \in \R^{n\times K}$ denote the feature matrix consisting of the top $K$ PCs of $X$. We then fit the logistic regression model
\begin{align*}
    y_i\mid Z_i \sim \mathrm{Bernoulli}( (1 + \exp(-\beta_0 -Z_{\ob{M},i}\tran \beta))^{-1} ),
\end{align*}
where $\beta_0$ represents the intercept.
We begin with a simulation study, followed by an application in single-cell data analysis.

\paragraph{Simulations} For our simulations, the predictors are generated as $x_i \iid \N_p(0, \Sigma)$, where the covariance matrix $\Sigma$ has entries $\Sigma_{ij} = \rho^{|i-j|}$. The responses are generated as $y_i \iid \mathrm{Bernoulli}(1/2)$, with no dependence between $y$ and $x$. We fix $n = 100$, $p = 50$, and vary $\rho \in \{0.3, 0.6, 0.9\}$, which are labeled as Corr-$1$, Corr-$2$, and Corr-$3$, respectively, in the plots.
Inference is only performed when the selected $K>0$. After selection, we construct confidence intervals for the regression coefficients, and test the global null hypothesis that all $\beta_j = 0$ for $1 \leq j \leq K$.

Ignoring the selection step, one would construct Wald-type confidence intervals for the individual regression coefficients $\beta_j$ for $1 \leq j \leq K$, and perform a likelihood ratio test (LRT) for the global null hypothesis. However, due to selection bias, the resulting confidence intervals would fail to achieve the nominal coverage probabilities, and the LRT would not properly control the Type $\Rom{1}$ error. Our method, by contrast, accounts for the selection of the number of principal components, providing valid tests and confidence intervals for the post-selection parameter.

The simulation is repeated 1000 times for each scenario, among which about 300-400 simulations select a non-zero $K$.
The results are shown in Figure~\ref{fig: pcr}.
The confidence intervals constructed by our data-carving method achieve the target coverage probability, with lengths only slightly longer than those of the na{\"i}ve Wald intervals. Additionally, our method provides a valid test for the global null hypothesis, with proper control of the selective Type~$\Rom{1}$ error as expected.

\begin{figure}
    \centering
    \includegraphics[width=\textwidth]{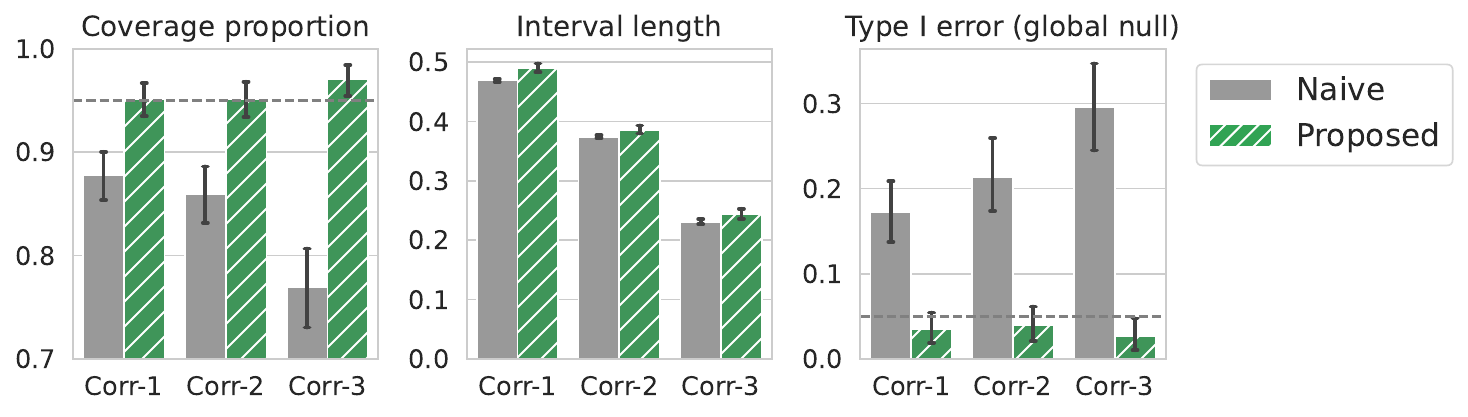}
    \caption{Results for the principal component regression example. Left panel: average coverage proportions of regression coefficients $\beta_j$ in the selected PCR model. Middle panel: average interval lengths for the confidence intervals of $\beta_j$. Right panel: Type $\Rom{1}$ error for testing the global null. }
    \label{fig: pcr}
\end{figure}

\paragraph{Application in single-cell gene expression analysis} Single-cell RNA sequencing (scRNA-seq) enables researchers to profile gene expression at the resolution of individual cells, allowing for the discovery and characterization of highly specialized cell types \citep{regev2017human}. Due to the high dimensionality of gene expression data, principal component analysis (PCA) is widely used for dimensionality reduction, and is applied as a standard preprocessing step in popular tools for analyzing scRNA-seq data, such as the \texttt{Seurat} R package \citep{satija2015spatial, stuart2019comprehensive, hao2021integrated}.
One common application involves using PCA to derive features before fitting a model in a supervised analysis.
For example, the prediction algorithm \texttt{scPred} \citep{alquicira2019scpred} trains classifiers using the first few PCs to perform tasks such as predicting cell types in peripheral blood mononuclear cells (PBMCs). The number of PCs to retain is typically selected using heuristics like the elbow plot. More principled methods, such as data thinning or cross-validation, can also be applied for selection, as demonstrated in \cite{neufeld2024data}.

Following a similar approach to \citep{alquicira2019scpred}, we focus on distinguishing memory B cells from na\"ive B cells in a PBMC dataset. We use the publicly available 10X Genomics dataset\footnote{\url{https://www.10xgenomics.com/datasets/5k_Human_Donor1_PBMC_3p_gem-x}} and fit a logistic PCR model, as described earlier in this section. The number of PCs included in the logistic regression fit is selected using a five-fold CV. Preprocessing steps follow the guidelines in the \texttt{Seurat} tutorial\footnote{\url{https://satijalab.org/seurat/articles/pbmc3k_tutorial}}, with details provided in Appendix~\ref{sec: preprocessing details}. After preprocessing, we retain 2000 genes and 233 cells annotated as either memory B cells (140 cells) or na\"ve B cells (93 cells). 

CV selects the top six PCs, which are then used to construct features for the logistic regression model. Our focus is on constructing p-values and confidence intervals for the coefficients in the fitted logistic regression model, as commonly reported in inference summary tables.
As discussed earlier, the na{\"i}ve inferential approach that ignores the presence of the CV procedure used to construct the feature matrix will likely produce overly optimistic and misleading conclusions. In contrast, our proposed method offers a principled approach to valid inference by accounting for the selection procedure while utilizing the entire dataset for this task.

Table~\ref{table: pbmc} summarizes our results. The na{\"i}ve method reports the first three PCs as statistically significant, whereas our method identifies only the first two. For the remaining coefficients, our method produces wider confidence intervals than the na\"ive method, reflecting the additional uncertainty introduced by the data-driven construction of features.



\begin{table}[h]
\centering
\begin{tabular}{c|cc|cc}
\toprule
 & \multicolumn{2}{c|}{Na\"ive} & \multicolumn{2}{c}{Proposed} \\
PC & p-value & CI & p-value & CI \\
\midrule
1 & 0.000 & {$(-1.913, -0.581)$} & 0.005 & $(-1.840, -0.359)$ \\
2 & 0.000 & {$(-4.117, -2.375)$} & 0.001 & $(-4.590, -2.676)$ \\
3 & 0.022 & {$(-1.609, -0.125)$} & 0.783 & $(-2.947, \phantom{-}1.170)$ \\
4 & 0.516 & {$(-0.810,\phantom{-} 0.407)$} & 0.834 & $(-1.932, \phantom{-}1.106)$ \\
5 & 0.588 & {$(-0.379,\phantom{-} 0.668)$} & 0.264 & $(-0.353, \phantom{-}1.175)$ \\
6 & 0.136 & {$(-0.162,\phantom{-} 1.186)$} & 0.889 & $(-4.660, \phantom{-}1.323)$ \\
\bottomrule
\end{tabular}
\caption{P-values and 95\% confidence intervals for the regression coefficients from the selected logistic PCR in the PBMC data analysis. }
\label{table: pbmc}
\end{table}

\section{Concluding remarks}
\label{sec:conc}

In this paper, we introduce a data-carving method that enables powerful and flexible selective inference with conditional guarantees. On the one hand, unlike approaches such as data-splitting—that condition on the data used for selection, equivalent to conditioning on much more than necessary—our method reuses data from the selection steps by conditioning on less. On the other hand, unlike existing data-carving methods—which rely on an explicit analytical characterization of the selection event—our method can handle selection events for which no such description is available. 

Our data-carving method applies to selection procedures both with and without additional randomization.  In cases where certain types of selection events are known to suffer a loss in power without external randomization, our method, similar to approaches in randomized selective inference, can take into account the randomized selection procedure to improve power. However, in contrast to these existing methods, which rely on a specific form of randomization to make selective inference feasible, our approach in this paper does not rely on the form of the randomization used. For example, in our simulations, we used a standard CV procedure based on sample splitting to select tuning parameters. However, we note that our method can be readily applied to alternative CV techniques, such as the antithetic CV proposed in \cite{liu2024cross}, which employs a correlated Gaussian randomization scheme to select tuning parameters.

The key statistical idea underlying our approach is simple: we use a pushforward transport map from a simple reference density to the conditional distribution, and then apply the inverse map---the pullback map---to perform selective inferential tasks such as hypothesis testing and interval estimation. To efficiently learn the pullback map, we employ a normalizing flow. More broadly, our work demonstrates how powerful tools from generative modeling can be utilized to broaden the scope of selective inference methodology, while still ensuring strong conditional guarantees, such as control of the selective Type $\Rom{1}$ error and selective coverage probability.

Finally, we identify several promising directions for future research. For example, in the spirit of simulation-based inference, other types of parameterizations of the transport map, such as the generative neural networks in \cite{liu2021density}, could be used in place of normalizing flows for learning the target conditional distribution. Potential extensions of our work include developing selective inference within a Bayesian framework, which involves sampling from a posterior formed by appending a prior to the selective likelihood. However, similar to the frequentist line of work, existing methods have primarily focused on selection events with analytical characterizations. By contrast, new extensions could enable selective inference for potentially intractable posteriors, allowing data-scientists to leverage the full benefits of the Bayesian framework. 

\appendix

\section{Derivation of the conditional density}
\label{app: conditional density}

\begin{proposition}
    The conditional density of $T\mid\widehat M=\ob M$ is proportional to
    \begin{align*}
         p_{\theta^{\ob{M}}}(t) \times \PP[\thetaobs]{\widehat M =\ob{M}\mid T=t}.
    \end{align*}
\end{proposition}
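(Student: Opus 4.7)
The plan is to derive the conditional density by a direct application of Bayes' rule applied to the joint law of $(T, \widehat{M})$, which is a mixed object since $T$ is continuous while $\widehat{M}$ takes values in a discrete collection of models. The ambient probability space is equipped with $(D, W) \sim \mathbb{P}_{\theta^{\ob{M}}} \times \mathbb{Q}$, and the joint law of $(T, \widehat{M})$ is obtained as the pushforward of this product measure under the map $(D,W) \mapsto (T(D), \widehat{M}(D,W))$.

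First, I would disintegrate the joint distribution of $(T, \widehat{M})$ by writing, for any measurable set $A$,
\begin{align*}
\mathbb{P}_{\theta^{\ob{M}}}\bigl(T \in A,\; \widehat{M} = \ob{M}\bigr)
= \int_A p_{\theta^{\ob{M}}}(t)\,\mathbb{P}_{\theta^{\ob{M}}}\bigl(\widehat{M} = \ob{M} \mid T = t\bigr)\,dt,
\end{align*}
which follows from the tower property together with the existence of a regular conditional distribution for $\widehat{M}$ given $T$. Next, dividing both sides by the marginal probability $\mathbb{P}_{\theta^{\ob{M}}}(\widehat{M} = \ob{M})$ and reading off the density from the integrand yields
\begin{align*}
p^*_{\theta^{\ob{M}}}(t)
= \frac{p_{\theta^{\ob{M}}}(t)\,\mathbb{P}_{\theta^{\ob{M}}}(\widehat{M} = \ob{M} \mid T = t)}{\mathbb{P}_{\theta^{\ob{M}}}(\widehat{M} = \ob{M})}
\propto p_{\theta^{\ob{M}}}(t)\,\mathbb{P}_{\theta^{\ob{M}}}(\widehat{M} = \ob{M} \mid T = t),
\end{align*}
since the denominator does not depend on $t$ and merely serves as the normalizing constant.

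The main subtlety, rather than a serious obstacle, is clarifying what $\mathbb{P}_{\theta^{\ob{M}}}(\widehat{M} = \ob{M} \mid T = t)$ means: because $\widehat{M}$ depends on both the data $D$ and the external randomization $W$, while $T = T(D)$ is only a function of $D$, this conditional probability implicitly marginalizes out both the residual randomness in $D$ given $T = t$ and the independent noise $W \sim \mathbb{Q}$. I would make this explicit by writing the joint measure as $\mathbb{P}_{\theta^{\ob{M}}}(\cdot \mid T = t) \otimes \mathbb{Q}$ when conditioning on $T = t$, and noting that the selection event $\{\widehat{M}(D,W) = \ob{M}\}$ is measurable with respect to this product structure. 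With this in place, the remainder of the argument is purely mechanical.
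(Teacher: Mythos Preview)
Your proposal is correct and is essentially the same Bayes-rule argument as the paper's proof. The only organizational difference is that the paper first writes out the joint density of $(T,W)$, conditions on the selection event, and then integrates out $W$, whereas you work directly at the level of $(T,\widehat M)$ and handle the marginalization over $W$ and the residual randomness in $D$ as a clarifying remark afterward; the mathematical content is identical.
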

\begin{proof}
The joint density of $T$ and $W$ is given by $p_{\theta^{\ob{M}}}(t) \times p_W(w)$. The joint density conditional on $\widehat M(D, W)=\ob M$ is proportional to
\begin{align*}
    p(t, w\mid\widehat M(D,w)=\ob M )\propto p_{\theta^{\ob{M}}}(t) \cdot p_W(w) \cdot \PP{\widehat M(D,w) = \ob M \mid T=t, W=w }.
\end{align*}
Integrating over $w$, we obtain
\begin{align*}
    p_{\theta^{\ob{M}}}^*(t) \propto p_{\theta^{\ob{M}}}(t) \cdot \PP{\widehat M(D, W) = \ob M \mid T=t },
\end{align*}
where the expectation is taken over $W$ and $D\mid T=t$.
\end{proof}

\section{Details of the experiments}
\label{sec: experiment details}
\subsection{Normalizing flow architecture}

We construct the normalizing flow $\tau(\cdot)=\tau(\cdot;\psi)$ by stacking $L$ affine coupling layers \cite{dinh2017density}. Given a subset $u\subset1{:}d$, let $-u$ denote the complement of $u$. Let $\bfx$ and $\bfx'$ denote the input and output of an affine coupling layer, with the mapping given by
\begin{align*}
    \bfx_{u}' &= \bfx_u, \\
    \bfx_{-u}' &= \mathrm{softplus}(s(\bfx_u))\odot \bfx_{-u} + t(\bfx_u).
\end{align*}
The scaling function $s(\cdot)$ and shifting function $t(\cdot)$ are functions of $\bfx_u$, and are parametrized by multi-layer perceptrons (MLP). The softplus operator $\log(1+e^x)$ is used to ensure that the scaling is positive.

The coupling layer keeps the input variables $\bfx_u$ unchanged, while transforming the remaining variables $\bfx_{-u}$ using a componentwise affine transformation, whose scale and shift are determined by the values of $\bfx_u$. This structure ensures that the Jacobian of this transformation is a triangular matrix, and the determinant can be computed efficiently. The inverse of the transformation can be computed by applying the inverse of the affine transformation to $\bfx_{-u}'$ and keeping $\bfx_u$ unchanged.

For conditional normalizing flows, we concatenate the parameter value $\theta$ with the input $\bfx_u$ in every scaling and shifting function, and they become $s(\bfx_u, \theta)$ and $t(\bfx_u,\theta)$, respectively. In all the experiments, we use $L=12$ coupling layers. Each scaling and shifting MLP has one hidden layer with 8 neurons, and uses the ReLU activation.

For the one-dimensional case, we parametrize the one-dimensional map $\tau:\R\to\R$ by a monotonic rational-quadratic spline \cite{durkan2019neural}, where the knots and the derivatives at the internal points are the parameters to be learned. If a conditional flow is trained, the knots and derivatives are outputs of an MLP, whose input is the parameter $\theta$. 
In all the experiments, we use 20 bins for the rational-quadratic spline. 

\subsection{Training details}

We generate $2000$ training samples and $500$ validation samples. The training samples are used to compute the loss function in~\eqref{equ: kl objective}. We run full-batch Adam with 10000 iterations, and compute the validation loss using the validation samples every 1000 iterations. The parameter corresponding to the lowest validation loss is selected as the final parameter. The learning rate is set to $10^{-4}$ for all the experiments initially. If training diverges, the learning rate is changed to $10^{-5}$.

\subsection{Generating training data}

In all the experiments except the example of spline regression, we need to train a conditional normalizing flow $\hat\tau_\theta$ that pushes forward $\bbP_\theta$ to $\bbP_\theta^*$ for $\theta\in\Theta$. Therefore, we need to generate pairs of $(T^{(b)},\theta^{(b)})$ such that $T^{(b)}\sim \bbP_{\theta^{(b)}}^*$ for $1\leq b\leq B$. In the implementation, we draw $\theta^{(b)}$ from a normal distribution, whose center is the (unconditional) MLE of $\theta$ and whose covariance is the covariance of the MLE. Given every $\theta^{(b)}$, we draw $T^{(b)}\sim \bbP_{\theta^{(b)}}^*$ by rejection sampling. If the rejection sampling fails to produce a sample after 100 tries, we stop and continue to generate the next $\theta^{(b+1)}$.

\subsection{PBMC data preprocessing}
\label{sec: preprocessing details}
We use the \texttt{Seurat} package to preprocess the gene expression data. We first filtered out low-quality cells by retaining only those that expressed more than 200 and fewer than 5000 genes, and had less than 5\% of their total expression coming from mitochondrial genes. Gene expression counts were then normalized to account for differences in sequencing depth across cells. From the normalized data, we identified the 2000 most variable genes across all cells and standardized their expression levels to have zero mean and unit variance. Finally, we only keep the cells that are annotated as either ``memory B cell'' or ``naive B cell'', resulting in a data matrix of size $233\times 2000$. Principal components are computed based on this matrix.

\bibliographystyle{apalike}
\bibliography{ref}
\end{document}